\begin{document}
\title{The expressive power of revised Datalog on problems with closure properties}
%
%
\author{Shiguang Feng\orcidID{0000-0002-5110-3881}}
\authorrunning{S. Feng}
%
\institute{School of Computer Science and Engineering, Sun Yat-sen University, Guangzhou, 510006, China\\
\email{fengshg3@mail.sysu.edu.cn}}
\maketitle              
\begin{abstract}
In this paper, we study the expressive power of revised Datalog on the problems that are closed under substructures.
We show that revised Datalog cannot define all the problems that are in PTIME and closed under substructures. As a corollary, LFP cannot define all the extension-closed problems that are in PTIME.

\keywords{Datalog \and  preservation theorem \and closure property \and expressive power}
\end{abstract}
\section{Introduction}
Datalog and its variants are widely used in artificial intelligence and other fields, such as deductive database, knowledge representation, data integration, cloud computing, etc~\cite{doan2012dataintergration,alvaro2003database,gurevich2012datalog,marinescu2018cloud,Revesz1998constraint,siekmann2014historylogic}. 
As a declarative programming language, it is often used to perform data analysis and create complex queries. The complexity and expressive power is an important issue of the study~\cite{abiteboul1995foundations,dantsin2001complexity,Revesz1998constraint,Schlipf1995complexity,Shmueli1987decidability}.
With the recursive computing ability, Datalog is more powerful than first-order logic. It defines exactly the polynomial time computable queries on ordered finite structures~\cite{ebbinghaus1995}. Hence, Datalog captures the complexity class PTIME on ordered finite structures. While on all finite structures, the expressive power of Datalog is very limited. It even cannot define the parity of a set~\cite{ebbinghaus1995}. A Datalog program is constituted of a set of Horn clauses. The characteristics of syntax determine the monotonicity properties of its semantics. That is, every Datalog (resp., positive Datalog, the fragment of Datalog where no negated atomic formula occurs in the body of any clauses) definable query is preserved under extensions~\cite{afrati1995datalog} (resp., homomorphisms~\cite{ajtai1994datalog}). It is natural to ask from the point of view of descriptive complexity that whether Datalog (resp., positive Datalog) captures the polynomial time computable problems that are closed under extensions (resp., homomorphisms). The answer is negative by the work of Afrati et al. who showed that positive Datalog cannot express all monotone queries computable in polynomial time, and the perfect squares problem that is in polynomial time and closed under extensions is not expressible in Datalog~\cite{afrati1995datalog}.

In model theory, many preservation theorems are proved to show the relationship between the closure properties and the syntactic properties of formulas. 
Most of these preservation theorems fail when restricted to finite structures. A lot of research about the preservation theorems on Datalog, first-order logic (FO) and least fixpoint logic (LFP) have been conducted on finite structures. Ajtai and Gurevich showed that a positive Datalog formula is bounded iff it is definable in positive existential first-order logic, and every first-order logic expressible positive Datalog formula is bounded~\cite{ajtai1994datalog}, where a Datalog formula is bounded if there exists a number $n$ such that the fixpoint of the formula can be reached for any finite structure within $n$ steps.
Dawar and Kreutzer showed that the homomorphism preservation theorem fails for LFP, both in general and in restriction to finite structures~\cite{dawar2008datalog}. That is, there is an LFP formula that is preserved under homomorphisms (in the finite) but is not equivalent (in the finite) to a Datalog formula.
The paper~\cite{ketsman2020datalog} studied Datalog with negation and monotonicity, and the expressive power with respect to monotone and homomorphism properties.
The papers~\cite{Dawar2021extension,Rosen1995finite} studied the preservation results under extensions for FO and Datalog. All the results are summarized in Fig.~\ref{fig-summary}.

\begin{figure}
	\begin{center}
		\begin{adjustbox}{scale = 0.8}
			\begin{tikzpicture}[font=\footnotesize]
				\draw[rounded corners] (-0.7,-1.2) rectangle (2.85,-1.8);
				\draw (1.1,-1.5) node [] {bounded pos-Datalog};
				\draw[blue, thick, dash pattern=on 8pt off 3pt, latex-latex] (0.3,0.1) -- (0.3,-1.15); 
				\draw[red, thick, dash pattern=on 8pt off 3pt, latex-latex] (1.55,0.1) -- (1.55,-1.15); 
				\draw[red, thick, dash pattern=on 8pt off 3pt, latex-latex] (2.9,-1.5) -- (4.15,-1.5); 
				\draw[rounded corners] (-0.4,0.15) rectangle (2.35,0.75);
				\draw (1,0.45) node [] {pos-Datalog[FO]};
				\draw[blue, thick, -latex] (0.3,0.85) -- (0.3,2.15); 
				\draw[red, thick, -latex] (1.55,0.85) -- (1.55,2.15); 
				\draw[red, thick, -latex] (-0.45,0.5) -- (-1.95,1.4); 
				\draw[blue, thick, -latex] (-0.45,0.35) -- (-1.95,1.25); 
				\draw[rounded corners] (-0.5,2.2) rectangle (2.45,2.8);
				\draw (1,2.5) node [] {bounded Datalog};
				\draw[blue, thick, -latex] (0.3,2.85) -- (0.3,4.15); 
				\draw[red, thick, -latex] (1.55,2.85) -- (1.55,4.15); 
				\draw[blue, thick, -latex] (2.1,2.85) -- (2.1,6.15); 
				\draw[red, thick, dash pattern=on 8pt off 3pt, latex-latex] (2.55,2.5) -- (6.65,2.5); 
				\draw[rounded corners] (-0.3,4.2) rectangle (1.85,4.8);
				\draw (0.8,4.5) node [] {Datalog[FO]};
				\draw[blue, thick, -latex] (0.3,4.85) -- (0.3,8.15); 
				\draw[red, thick, -latex] (1.55,4.85) -- (1.55,6.15); 
				\draw[red, thick, -latex] (-0.35,4.5) -- (-2.15,5.4); 
				\draw[blue, thick, -latex] (-0.35,4.35) -- (-2.15,5.25); 
				\draw[black, very thick, loosely dotted, -latex] (1.95,4.5) -- (6.6,4.5); 
				\draw[rounded corners] (0.5,6.2) rectangle (3.5,6.8);
				\draw (2,6.5) node [] {bounded Datalog$^r$};
				\draw[blue, thick, -latex] (2.1,6.85) -- (2.1,8.15); 
				\draw[red, thick, dash pattern=on 8pt off 3pt, latex-latex] (1.55,6.85) -- (1.55,8.15); 
				\draw[rounded corners] (-0.2,8.2) rectangle (2.15,8.8);
				\draw (1,8.5) node [] {Datalog$^r$[FO]};
				\draw[red, thick, -latex] (-0.25,8.6) -- (-2.5,9.65); 
				\draw[blue, thick, -latex] (-0.25,8.45) -- (-2.85,9.65); 

				\draw[rounded corners] (7.85,4.2) rectangle (6.65,4.8);
				\draw (7.25,4.5) node [] {FO[E]};
				\draw[red, thick, -latex] (7.3,4.85) -- (7.7,5.25); 
				\draw[red, thick, -latex] (7.1,4.85) -- (7.1,8.15); 
				\draw[rounded corners] (7.9,-1.2) rectangle (6.7,-1.8);
				\draw (7.3,-1.5) node [] {FO[H]};
				\draw[red, thick, dash pattern=on 8pt off 3pt, latex-latex] (5.75,-1.5) -- (6.65,-1.5); 
				\draw[red, thick, -latex] (7.1,-1.15) -- (7.1,2.15); 
				
				\draw[rounded corners] (4.2,-1.2) rectangle (5.7,-1.8);
				\draw (5,-1.5) node [] {pos-$\exists$FO};
				\draw[red, thick, -latex] (5,-1.15) -- (5,1.15); 
				\draw[rounded corners] (6.7,2.2) rectangle (7.7,2.8);
				\draw (7.2,2.5) node [] {$\exists$FO};
				\draw[red, thick, -latex] (7.1,2.85) -- (7.1,4.15); 
				
				\draw[rounded corners] (6.6,8.2) rectangle (7.6,8.8);
				\draw (7.1,8.5) node [] {FO};
				\draw[red, thick, dash pattern=on 8pt off 3pt, latex-latex] (2.2,8.45) -- (6.55,8.45); 
				\draw[red, thick, -latex] (7.1,8.85) -- (5.85,9.8); 
				\draw[rounded corners] (4.1,9.7) rectangle (5.8,10.3);
				\draw (5,10) node [] {LFP};
				\draw[red, thick, -latex] (5.85,10) -- (8.55,10); 
				
				\draw[rounded corners] (-2,1.2) rectangle (-4,1.8);
				\draw (-3,1.5) node [] {pos-Datalog};
				\draw[blue, thick, -latex] (-3.1,1.85) -- (-3.1,3.15); 
				\draw[red, thick, dash pattern=on 8pt off 3pt, latex-latex] (-2.9,1.85) -- (-2.9,3.15); 
				\draw[rounded corners] (-2.1,3.2) rectangle (-3.9,3.8);
				\draw (-3,3.5) node [] {Datalog[H]};
				\draw[blue, thick, -latex] (-3.1,3.85) -- (-3.1,5.15); 
				\draw[red, thick, -latex] (-2.9,3.85) -- (-2.9,5.15); 
				\draw[rounded corners] (-2.2,5.2) rectangle (-3.8,5.8);
				\draw (-3,5.5) node [] {Datalog};
				\draw[blue, thick, -latex] (-3.1,5.85) -- (-3.1,9.65); 
				\draw[red, thick, -latex] (-2.9,5.85) -- (-2.9,9.65); 
				\draw[rounded corners] (-2.1,9.7) rectangle (-3.9,10.3);
				\draw (-3,10) node [] {Datalog$^r$};
				\draw[red, thick, dash pattern=on 8pt off 3pt, latex-latex] (-2.05,10) -- (4.05,10); 
				
				\draw[rounded corners] (4.5,5.2) rectangle (5.5,5.8);
				\draw (5,5.5) node [] {$\exists$LFP};
				\draw[red, thick, -latex] (5,5.85) -- (5,9.65); 
				\draw[red, thick, dash pattern=on 8pt off 3pt, latex-latex] (-2.1,5.5) -- (4.45,5.5); 
				\draw[black, very thick, loosely dotted, -latex] (5.8,5.5) -- (7.65,5.5); 
				
				\draw[rounded corners] (4.15,1.2) rectangle (5.8,1.8);
				\draw (5,1.5) node [] {pos-$\exists$LFP};
				\draw[red, thick, -latex] (5,1.85) -- (5,5.15); 
				\draw[red, thick, dash pattern=on 8pt off 3pt, latex-latex] (-1.95,1.5) -- (4.1,1.5); 

				\draw[rounded corners] (7.7,5.2) rectangle (9.1,5.8);
				\draw (8.4,5.5) node [] {LFP[E]};
				\draw[red, thick, -latex] (8.45,5.85) -- (8.45,7.15); 
				\draw[rounded corners] (8.2,1.2) rectangle (9.6,1.8);
				\draw (8.9,1.5) node [] {LFP[H]};
				\draw[red, thick, -latex] (5.9,1.5) -- (8.15,1.5); 
				\draw[red, thick, -latex] (8.45,1.85) -- (8.45,5.15); 
				\draw[black, very thick, loosely dotted, -latex] (9.3,1.85) -- (9.3,3.15); 

				\draw[rounded corners] (8.6,9.7) rectangle (10,10.3);
				\draw (9.3,10) node [] {PTIME};
				\draw[rounded corners] (8,7.2) rectangle (9.8,7.8);
				\draw (8.9,7.5) node [] {PTIME[E]};
				\draw[red, thick, -latex] (9.3,7.85) -- (9.3,9.65); 
				\draw[rounded corners] (8.6,3.2) rectangle (10.4,3.8);
				\draw (9.5,3.5) node [] {PTIME[H]};
				\draw[red, thick, -latex] (9.3,3.8) -- (9.3,7.1); 
				
			\end{tikzpicture}
		\end{adjustbox}
	\end{center}
	\caption{\label{fig-summary}The relationship of FO, LFP, PTIME, Datalog and its variants. Datalog$^r$ denotes revised Datalog. pos-$\mathcal{L}$ denotes the positive fragment of $\mathcal{L}$. $\exists\mathcal{L}$ denotes the existential fragment of $\mathcal{L}$. $\mathcal{L}$[FO] denotes the set of $\mathcal{L}$ formulas that are first-order definable. $\mathcal{L}$[H] (resp., $\mathcal{L}$[E]) denotes the set of $\mathcal{L}$ formulas (or problems computable in $\mathcal{L}$) that are preserved under homomorphisms (resp., extensions).
	The blue arrow shows the containment relationship on Datalog and its variants. The red arrow shows the relationship about the expressive power. The solid arrow implies that the relationship is strict, and the dashed bidirectional arrow implies the equality relationship. The black dotted arrow means whether the relationship is strict is still open.}
\end{figure}

Revised Datalog (Datalog$^r$) is an extension of Datalog, where universal quantification over intensional relations is allowed in the body of rules.  Abiteboul and Vianu first introduced the idea that the body of a rule in Datalog can be universally quantified~\cite{abiteboul1991datalog}. The author of the paper showed that Datalog$^r$ equals LFP on all finite structures~\cite{feng2012}. In the paper, we study the expressive power of Datalog$^r$ on problems with closure properties, i.e., closed under substructures (or extensions). We conclude that a Datalog$^r$ formula is equivalent to a first-order formula iff it is equivalent to a bounded Datalog$^r$ formula. As the main result of the paper, we show that Datalog$^r$ cannot define all the problems that are in PTIME and closed under substructures. Since Datalog$^r$ equals LFP, and the complement of a substructure-closed problem is extension-closed, as a corollary, LFP cannot define all the extension-closed problems that are in PTIME. This result contributes the strict containment $\mathrm{LFP[E]} \subsetneq \mathrm{PTIME[E]}$ in Fig.~\ref{fig-summary}.  A technique of tree encodings for arbitrary structures is used in the proof. For an arbitrary set of structures $\mathcal{K}\in$ EXPTIME, we can encode them into a set of substructure-closed structures $\mathcal{K}'$, where the tree used to encode the structure in $\mathcal{K}$ is exponentially larger. Therefore, $\mathcal{K}'$ is in PTIME. For every structure in $\mathcal{K}'$, there is a characteristic structure $S_{\mathbf{T}}$ of it such that they are equivalent with respect to Datalog$^r$-transformations. Since $S_{\mathbf{T}}$ can be computed from the structure in $\mathcal{K}$ in logspace, this implies that $\mathcal{K}$ is also in PTIME, contrary to the time hierarchy theorem. Fig.~\ref{fig-tree-encoding} shows the sketch of the proof.

\begin{figure}
	\begin{center}
	\begin{adjustbox}{scale = 0.9}
		\begin{tikzpicture}[font=\footnotesize]
			\draw[rounded corners] (-0.8,0.2) rectangle (1.8,1.2);
			\draw (0.5,0.9) node [] {Arbitrary};
			\draw (0.5,0.5) node [] {$\mathcal{K}\in$ EXPTIME};
			
			\draw (3.15,0.9) node [] {Tree encodings};
			\draw[thick](2,0.7) -- (4.4,0.7);
			\draw[thick](4.3,0.8) -- (4.4,0.7) -- (4.3,0.6);
			
			\draw[rounded corners] (4.5,0.2) rectangle (7.6,1.2);
			\draw (6.05,0.9) node [] {Substructure closed};
			\draw (6.05,0.5) node [] {$\mathcal{K}'\in$ PTIME};
			
			\draw (8.7,0.95) node [] {Datalog$^r$};
			\draw (8.7,0.45) node [] {equivalent};
			\draw[thick](7.7,0.7) -- (9.7,0.7);
			\draw[thick](9.6,0.8) -- (9.7,0.7) -- (9.6,0.6);

			\draw[rounded corners] (9.8,0.2) rectangle (13.2,1.2);
			\draw (11.5,0.9) node [] {Class of characteristic};
			\draw (11.5,0.5) node [] {structures $S_{\mathbf{T}}$};
			
			\draw (6.3,-0.4) node [] {Logspace computable};
			\draw [thick] (0.5,0.1) -- (0.5,-0.6) -- (11.5,-0.6) -- (11.5,0.1);
			\draw [thick] (11.35,-0.1) -- (11.5,0.1) -- (11.65,-0.1);
		\end{tikzpicture}
	\end{adjustbox}
	\end{center}
	\caption{The idea of the proof for the nondefinability of Datalog$^r$.\label{fig-tree-encoding}}
\end{figure}

The paper is organized as follows: In Section~\ref{sec-pre}, we give the basic definitions and notations. 
In Section~\ref{sec-redatalog-closed}, we recall invariant relations on perfect binary trees, and introduce the technique of tree encodings for arbitrary structures.  And we prove the nondefinability results for Datalog$^r$ on substructure-closed problems. Finally, we conclude the paper in Section~\ref{sec-concl}.

\section{Preliminaries}\label{sec-pre}
Let $\tau=\{\mathbf{c}_{1},\dots,\mathbf{c}_{m},P_{1},\dots,P_{n}\}$ be a vocabulary, where $\mathbf{c}_{1},\dots,\mathbf{c}_{m}$ are constant symbols and $P_{1},\dots,P_{n}$ are relation symbols. A $\tau$-structure is a tuple 
$\mathbf{A}=\langle A,\mathbf{c}_{1}^{A},\dots,\mathbf{c}_{m}^{A},P_{1}^{A},\dots,P_{n}^{A}\rangle$
where $A$ is the domain, and $\mathbf{c}_{1}^{A},\dots,\mathbf{c}_{m}^{A}$, $P_{1}^{A},\dots,P_{n}^{A}$ are interpretations of the constant and relation symbols over $A$, respectively. We assume the equality relation ``='' is contained in every vocabulary, and omit the superscript ``$A$'' when it is clear from context.
We call $\mathbf{A}$ \textit{finite} if its domain $A$ is a finite set. Unless otherwise stated, all structures considered in this paper are finite. We use $arity(R)$ to denote the arity of a relation $R$, and use ``$|\ |$'' to indicate the cardinality of a set or the arity of a tuple, e.g., $|A|$ denotes the cardinality of $A$ and $|(x_1,x_2,x_3)|=3$.
A finite structure is \textit{ordered} if it is equipped with a linear order relation ``$\leq$'', and the successor relation ``$\mathrm{SUCC}$'', the constants ``\textbf{min}'' and ``\textbf{max}''for the minimal and maximal elements, respectively, with respect to ``$\leq$''.
Let $\mathbf{A}=\langle A,\mathbf{c}_{1}^{A},\dots,\mathbf{c}_{m}^{A},P_{1}^{A},\dots,P_{n}^{A}\rangle$ and $\mathbf{B}=\langle B,\mathbf{c}_{1}^{B},\dots,\mathbf{c}_{m}^{B},P_{1}^{B},\dots,P_{n}^{B}\rangle$ be two structures. If $B\subseteq A$, $\mathbf{c}_{i}^{A} = \mathbf{c}_{i}^{B}$ $(1\leq i \leq m)$, and $P_{j}^{B} = P_{j}^{A} \cap B^{arity(P_j)}$ $(1\leq j \leq n)$, then we say that $\mathbf{B}$ is a \textit{substructure} of $\mathbf{A}$, and $\mathbf{A}$ is an \textit{extension} of $\mathbf{B}$.

An $r$-ary \textit{global relation} $R$ of a vocabulary $\tau$ is a mapping that assigns to every $\tau$-structure $\mathbf{A}$ an $r$-ary relation $R^{A}$ over $A$ such that for every isomorphism $\pi: \mathbf{A} \simeq \mathbf{B}$ and every $a_1,\dots,a_r\in A$, $\mathbf{A} \models R^{A}a_1\dots a_r$ iff $\mathbf{B} \models R^{B}\pi(a_1)\dots \pi(a_r)$.
A \textit{query} is a global relation. We say that a query $\mathcal{Q}$ is expressible in a logic $\mathcal{L}$ if there is an $\mathcal{L}$-formula that defines $\mathcal{Q}$. 
Two formulas are \textit{equivalent} if they define the same query.
Given two logics $\mathcal{L}_1$ and $\mathcal{L}_2$, we use $\mathcal{L}_1\leq \mathcal{L}_2$ to denote that every $\mathcal{L}_1$-formula is equivalent to an $\mathcal{L}_2$-formula. If $\mathcal{L}_1\leq \mathcal{L}_2$ and $\mathcal{L}_2\leq \mathcal{L}_1$, then we denote it by $\mathcal{L}_1\equiv \mathcal{L}_2$.

Suppose that a relation symbol $X$ occurs positively in $\varphi(\bar{x})$ and $|\bar{x}|=arity(X)$. Given a structure $\mathbf{A}$, we can define a monotonic sequence $X_0, X_1, X_2, \dots$ where $X_0 = \emptyset$, and $X_{i+1} = \{\bar{a} \mid (\mathbf{A}, X_i) \vDash \varphi[\bar{a}] \}$ for $i\geq 0$. Since $\mathbf{A}$ is finite, the sequence $X_0, X_1, X_2, \dots$ will eventually reach a fixpoint.

\begin{definition}
	The least fixpoint logic $\mathrm{LFP}$ is an extension of first-order logic by adding the following rule~\cite{ebbinghaus1995}:
	\begin{itemize}
		\item If $\varphi$ is an $\mathrm{LFP}$ formula, $X$ occurs positively in $\varphi$, and $|\bar{x}|=|\bar{u}|=arity(X)$, then 
		$[\mathrm{LFP}_{\bar{x},X}\varphi]\bar{u}$ is an $\mathrm{LFP}$ formula.
	\end{itemize}
\end{definition}

Given an LFP formula $[\mathrm{LFP}_{\bar{x},X}\varphi]\bar{u}$, for any structure $\mathbf{A}$ and $\bar{a} \in A$, we have $\mathbf{A} \vDash [\mathrm{LFP}_{\bar{x},X}\varphi]\bar{a}$ iff $\bar{a}$ is in the fixpoint of the sequence induced by $X$ and $\varphi$ on $\mathbf{A}$. 

\begin{proposition}~\cite{immerman1982relational,vardi1982complexity}
$\mathrm{LFP}$ captures $\mathrm{PTIME}$ on ordered finite structures.
\end{proposition}

\begin{definition}\label{def:defnOFdatalog}
	Let $\tau$ be a vocabulary. A $\mathrm{Datalog}$ program $\mathrm{\Pi}$ over $\tau$ is a finite set of rules of the form
	\[
	\beta\leftarrow\alpha_{1},\dots,\alpha_{l}
	\]
	where $l\geq0$ and
	\begin{description}
		\item {(1)} each $\alpha_{i}$ is either an atomic formula or a negated
		atomic formula,
		\item {(2)} $\beta$ is an atomic formula $R\bar{x}$, where $R$ doesn't
		occur negatively in any rule of $\mathrm{\Pi}$.
	\end{description}
\end{definition}

$\beta$ is the head of the rule and the sequence $\alpha_{1},\dots,\alpha_{l}$ constitute the body. Every relation symbol occurring in the head of some rule of $\mathrm{\Pi}$ is intensional, and the other symbols in $\tau$ are extensional. We use $(\tau,\mathrm{\Pi})_{\mathrm{int}}$ and $(\tau,\mathrm{\Pi})_{\mathrm{ext}}$ to denote the set of intensional and extensional symbols, respectively.
We also allow 0-ary relation symbols. If $Q$ is a 0-ary relation, its value is from $\{\emptyset,\{\emptyset\}\}$. $Q=\emptyset$ means that $Q$ is FALSE and $Q=\{\emptyset\}$ means that $Q$ is TRUE. 
We use the least fixpoint semantics for Datalog programs. 
A Datalog formula has the form $(\mathrm{\Pi},P)\bar{x}$, where $P$ is an $r$-ary intensional relation symbol and $\bar{x}=x_{1},\dots,x_{r}$ are variables that do not occur in $\mathrm{\Pi}$. For a $(\tau,\mathrm{\Pi})_{\mathrm{ext}}$-structure $\mathbf{A}$ and $\bar{a}=a_{1},\dots,a_{r}\in A$,
\[
\mathbf{A}\models(\mathrm{\Pi},P)\bar{x}[\bar{a}]\text{ iff }(a_{1},\dots,a_{r})\in P_{(\infty)},
\]
where $P_{(\infty)}$ is the least fixpoint for relation $P$ when $\mathrm{\Pi}$ is evaluated on $\mathbf{A}$. If $P$ is 0-ary, then $\mathbf{A}\models(\mathrm{\Pi},P)\text{ iff }P_{(\infty)}=\{\emptyset\}$.

\section{$\mathrm{Datalog}^r$ on problems with closure properties}\label{sec-redatalog-closed}
\subsection{Revised Datalog programs}

\begin{definition}
	In Definition~\ref{def:defnOFdatalog}, if we replace Condition (1) by
	\begin{description}
		\item {(1$'$)} each $\alpha_{i}$ is either an atomic formula, or a negated
		atomic formula, or a formula $\forall\bar{y}R\bar{y}\bar{z}$, where $R$ occurs in the head of some rule,
	\end{description}
	then we call this logic program revised $\mathrm{Datalog}$ program, denoted by $\mathrm{Datalog}^r$.
\end{definition}

\begin{example} 
	Let $G =\langle V,E \rangle$ be a directed acyclic graph, and the set of nodes $V$ partitioned into two disjointed sets $V_{\mathrm{uni}}$ and $V_{\mathrm{exi}}$. The nodes in $V_{\mathrm{uni}}$ (resp., $V_{\mathrm{exi}}$) are universal (resp., existential). The notion of alternating path is defined recursively.
	There is an alternating path from $\mathbf{s}$ to $\mathbf{t}$ in $G$ if
	\begin{itemize}
		\item $\mathbf{s} = \mathbf{t}$; or 
		\item $\mathbf{s}\in V_{\mathrm{exi}}$, $\exists x\in V$ such that $(\mathbf{s},x)\in E$ and
		there is an alternating path from $x$ to $\mathbf{t}$; or
		\item $\mathbf{s}\in V_{\mathrm{uni}}$, $\exists x\in V$ such that $(\mathbf{s},x)\in E$, and $\forall y\in V$, if $(\mathbf{s},y)\in E$ then there
		is an alternating path from $y$ to $\mathbf{t}$.
	\end{itemize}
	The alternating graph accessibility problem is defined as follows:
	\begin{description}
		\item [{Input:}] A directed acyclic graph $G=\langle V_{\mathrm{uni}}\cup V_{\mathrm{exi}}, E \rangle$ and two nodes $\mathbf{s}, \mathbf{t}$.
		\item [{Output:}] Yes if there is an alternating path from $\mathbf{s}$ to $\mathbf{t}$ in $G$, otherwise no.
	\end{description}
	This problem is $\mathrm{P}$-complete~\cite{immer1981number}.  
	The following $\mathrm{Datalog}^r$ program $\mathrm{\Pi}$ defines the alternating graph accessibility problem
	\vspace{-1.5em}
	\begin{center}
		\begin{minipage}{0.4\textwidth}
			\begin{align*}
				P_{\mathrm{alt}}xy & \leftarrow  x=y;\\
				P_{\mathrm{alt}}xy & \leftarrow  \neg V_{\mathrm{uni}}x,Exz,P_{\mathrm{alt}}zy;\\
				P_{\mathrm{uni}}x  & \leftarrow  V_{\mathrm{uni}}x,Exy;\\
				Qxzy 	  & \leftarrow  P_{\mathrm{uni}}x, \neg Exz;
			\end{align*} 
		\end{minipage}
		\begin{minipage}{0.4\textwidth}
			\begin{align*}
				Qxzy	  & \leftarrow  P_{\mathrm{uni}}x,Exz,P_{\mathrm{alt}}zy;\\
				P_{\mathrm{alt}}xy & \leftarrow  P_{\mathrm{uni}}x,\forall zQxzy;\\
				P 		  & \leftarrow  P_{\mathrm{alt}}\mathbf{s}\mathbf{t}.
			\end{align*}
		\end{minipage}
	\end{center}
	We have $(\tau,\mathrm{\Pi})_{\mathrm{int}}=\{P_{\mathrm{alt}},Q,P_{\mathrm{uni}},P\}$ and $(\tau,\mathrm{\Pi})_{\mathrm{ext}}=\{E,V_{\mathrm{uni}},\mathbf{s},\mathbf{t}\}$.
	The relation $P_{\mathrm{uni}}$ saves the node in $V_{\mathrm{uni}}$ that has a successor. The relation $P_{\mathrm{alt}}$ saves the pair $(x,y)$ such that there is an alternating path from $x$ to $y$.
	We use $Qxzy$ to denote that for any $x \in P_{\mathrm{uni}}$, either there is no edge from $x$ to $z$, or there is an alternating path from $z$ to $y$. 
	Consider the $\mathrm{Datalog}^r$ formula $(\mathrm{\Pi},P)$, for any directed acyclic $(\tau,\mathrm{\Pi})_{\mathrm{ext}}$-structure $\mathbf{A}$, it is easy to check that $\mathbf{A}\models(\mathrm{\Pi},P)$
	iff there is an alternating path from $\mathbf{s}$ to $\mathbf{t}$.
\end{example}

The Datalog formulas are preserved under extensions~\cite{Dawar2021extension}, i.e., if a structure $\mathbf{B}$ satisfies a Datalog formula $\varphi$ and $\mathbf{A}$ is an extension of $\mathbf{B}$, then $\mathbf{A}$ also satisfies $\varphi$. A directed acyclic graph with an alternating path from $\mathbf{s}$ to $\mathbf{t}$ can be extended to a directed acyclic graph without any alternating path from $\mathbf{s}$ to $\mathbf{t}$ by adding new nodes. So Datalog cannot define the alternating graph accessibility problem, which implies that $\mathrm{Datalog}^r$ is strictly more expressive than $\mathrm{Datalog}$. 
Allowing universal quantification over intensional relations is essential for $\mathrm{Datalog}^r$ to increase its expressive power. With the help of it, every FO(LFP) formula can be transformed into an equivalent $\mathrm{Datalog}^r$ formula.

\begin{proposition}\cite{feng2012}\label{prop-datalogrequivlfp}
	$\mathrm{Datalog}^r\equiv \mathrm{LFP}$ on all finite structures.
\end{proposition}

A Datalog program is \textit{positive} if no negated atomic formula occurs in the body of any rule. A Datalog formula  $(\mathrm{\Pi},P)\bar{t}$ is \textit{bounded} if there is an $n\geq 0$ such that $P_{(n)}=P_{(\infty)}$ for all structures. 
A bounded (positive) Datalog formula is equivalent to an existential (positive) first-order formula, and vice versa~\cite{ebbinghaus1995}.
Furthermore, a positive Datalog formula is bounded iff it is equivalent to a first-order formula. The statement is false for all Datalog formulas. There is an unbounded Datalog formula that is equivalent to an FO formula, but no bounded Datalog formula is equivalent to it~\cite{ajtai1994datalog}. Unlike Datalog, if an unbounded $\mathrm{Datalog}^r$ formula is equivalent to an FO formula, then it must be equivalent to a bounded $\mathrm{Datalog}^r$ formula.

\begin{proposition}
	A $\mathrm{Datalog}^r$ formula is equivalent to a first-order formula iff it is equivalent to a bounded $\mathrm{Datalog}^r$ formula.
\end{proposition}
\begin{proof}
	Suppose that a $\mathrm{Datalog}^r$ formula is equivalent to a first-order formula $\varphi$. Using the method in~\cite{feng2012} we can construct a bounded $\mathrm{Datalog}^r$ formula that is equivalent to $\varphi$. For the other direction, the proof in~\cite{ebbinghaus1995} which shows that every bounded $\mathrm{Datalog}$ formula is equivalent to an FO formula remains valid for bounded $\mathrm{Datalog}^r$ formulas.
\end{proof}

\subsection{Invariant relations on perfect binary trees}\label{sec-invar}
In~\cite{lindell1991analysis}, Lindell introduced invariant relations that are defined on perfect binary trees, and showed that there are queries computable in PTIME but not definable in LFP.

A perfect binary tree is a binary tree in which all internal nodes have two children and all leaf nodes are in the same level. Let $T=\langle V,E,\mathbf{root}\rangle $ be a perfect binary tree, where $V$ is the set of nodes, $E$ is the set of edges and $\mathbf{root}$ is the root node. 
Suppose that $R$ is an $r$-ary relation on $V$ and $f$ is an automorphism of $T$. Given a tuple $\bar{a}=(a_{1},\dots,a_{r})\in R$, we write $f(\bar{a})=(f(a_{1}),\dots,f(a_{r}))$ and $f[R]=\{(f(a_{1}),\dots,f(a_{r}))\mid(a_{1},\dots,a_{r})\in R\}$. We say that $R$ is an invariant relation if for every automorphism $f$, $R=f[R]$. It is easily seen that the equality $=$ and $E$ are invariant relations.


\begin{lemma}
	\label{lem:lem1} If $R_{1}$ and $R_{2}$ are $r$-ary invariant relations, then $\neg R_{1}$, $R_{1}\cap R_{2}$ and $R_{1}\cup R_{2}$ are also invariant relations.
\end{lemma}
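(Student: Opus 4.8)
The plan is to reduce everything to the elementary observation that an automorphism $f$ of $T$ acts as a \emph{bijection} on the set of $r$-tuples $V^{r}$ via $\bar{a}\mapsto f(\bar{a})=(f(a_{1}),\dots,f(a_{r}))$, and that forming the image of a set under a bijection commutes with the Boolean set operations. Invariance of an $r$-ary relation $R$ is, by definition, the statement that $f[R]=R$ for every automorphism $f$; so it suffices to verify this equation for each of $\neg R_{1}$, $R_{1}\cap R_{2}$, $R_{1}\cup R_{2}$ with $f$ arbitrary.

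First I would fix an arbitrary automorphism $f$ of $T$ and record that, since $f\colon V\to V$ is a bijection, the componentwise map $f\colon V^{r}\to V^{r}$ is also a bijection; in particular $f[V^{r}]=V^{r}$. Then, for any subsets $S_{1},S_{2}\subseteq V^{r}$, the standard identities $f[S_{1}\cup S_{2}]=f[S_{1}]\cup f[S_{2}]$, $f[S_{1}\cap S_{2}]=f[S_{1}]\cap f[S_{2}]$ (injectivity of $f$ is used here), and $f[V^{r}\setminus S_{1}]=V^{r}\setminus f[S_{1}]$ (surjectivity of $f$ is used here) hold. Applying these with $S_{1}=R_{1}$ and $S_{2}=R_{2}$, and using the hypotheses $f[R_{1}]=R_{1}$ and $f[R_{2}]=R_{2}$, we get $f[\neg R_{1}]=f[V^{r}\setminus R_{1}]=V^{r}\setminus f[R_{1}]=V^{r}\setminus R_{1}=\neg R_{1}$, and likewise $f[R_{1}\cap R_{2}]=f[R_{1}]\cap f[R_{2}]=R_{1}\cap R_{2}$ and $f[R_{1}\cup R_{2}]=f[R_{1}]\cup f[R_{2}]=R_{1}\cup R_{2}$. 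Since $f$ was arbitrary, each of the three relations is invariant.

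There is no genuine obstacle: the argument is a direct computation. The only point worth a word of care is that the commutation of images with intersection and with complement really does rely on $f$ being injective and surjective respectively — it would fail for an arbitrary endomorphism — which is precisely why we may invoke that $f$ is an automorphism. If one wants to be economical, one can also just prove the complement and intersection cases and derive the union case from $R_{1}\cup R_{2}=\neg(\neg R_{1}\cap\neg R_{2})$.
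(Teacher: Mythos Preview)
Your proof is correct and follows essentially the same approach as the paper: both fix an arbitrary automorphism $f$, use that images under the induced bijection on $V^{r}$ commute with complement, intersection, and union, and then apply the invariance hypotheses $f[R_{1}]=R_{1}$, $f[R_{2}]=R_{2}$. The paper presents this via element-chasing equivalences while you phrase it as set identities, but the content is identical; your remark on where injectivity and surjectivity are needed is a nice touch that the paper leaves implicit.
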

\begin{proof}
	It is easy to check that $\neg R_{1}$ is an invariant relation. Let $f$ be an arbitrary automorphism and $f^{-1}$ its inverse. To show that $R_{1}\cap R_{2}$ is an invariant relation, we have for any $\bar{a}$
	\[
		\bar{a}\in R_{1}\cap R_{2} \Longleftrightarrow  \bar{a}\in f[R_{1}]\cap f[R_{2}] \Longleftrightarrow f^{-1}(\bar{a})\in  R_{1}\cap R_{2}\Longleftrightarrow \bar{a}\in f[R_{1}\cap R_{2}].
	\]
	To show that $R_{1}\cup R_{2}$ is an invariant relation, we have for any $\bar{a}$
	\[
	\bar{a}\in R_{1}\cup R_{2} \Longleftrightarrow  \bar{a}\in f[R_{1}]\cup f[R_{2}] \Longleftrightarrow f^{-1}(\bar{a})\in  R_{1}\cup R_{2}\Longleftrightarrow \bar{a}\in f[R_{1}\cup R_{2}].
	\]
	This completes the proof. \qed
\end{proof}

\begin{lemma}\label{lem:lem2}
	Suppose that $R$ is an $r$-ary invariant relation,
	$R'$ is a $k$-ary invariant relation
	and $g$ is a permutation of $\{1,\dots,r\}$. Define
	\[
	\begin{aligned}
		R_{1} & = \{(a_{g(1)},\dots,a_{g(r)})\mid(a_{1},\dots,a_{r})\in R\},\\
		R_{2} & = \{(a_{1},\dots,a_{r},b_{1},\dots,b_{k})\mid (a_{1},\dots,a_{r})\in R \text{ and }(b_{1},\dots,b_{k})\in R' \}.
	\end{aligned}
	\]
	Then $R_{1}$ and $R_{2}$ are also invariant relations.
\end{lemma}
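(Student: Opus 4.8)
The plan is to verify directly from the definition of invariant relation that $R_1$ and $R_2$ are fixed by every automorphism $f$ of $T$. For each case the key observation is that the operation used to build the new relation (permuting coordinates, respectively Cartesian product) commutes with the pointwise action of $f$ on tuples, so that $f[R_1]=R_1$ and $f[R_2]=R_2$ follow from $f[R]=R$ and $f[R']=R'$.

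For $R_1$, fix an automorphism $f$. I would write out the chain of equivalences: $(a_{g(1)},\dots,a_{g(r)})\in R_1$ iff $(a_1,\dots,a_r)\in R$ iff (since $R=f[R]$) there is $(c_1,\dots,c_r)\in R$ with $(a_1,\dots,a_r)=(f(c_1),\dots,f(c_r))$ iff $(a_{g(1)},\dots,a_{g(r)})=(f(c_{g(1)}),\dots,f(c_{g(r)}))$ for some $(c_{g(1)},\dots,c_{g(r)})\in R_1$, which is exactly $(a_{g(1)},\dots,a_{g(r)})\in f[R_1]$. The only thing being used is that applying $f$ to each coordinate and then reordering the coordinates by $g$ is the same as reordering first and then applying $f$ to each coordinate; this is immediate since $f$ acts uniformly on all positions.

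For $R_2$, fix an automorphism $f$ again and argue similarly: $(a_1,\dots,a_r,b_1,\dots,b_k)\in R_2$ iff $(a_1,\dots,a_r)\in R$ and $(b_1,\dots,b_k)\in R'$, iff (using $R=f[R]$ and $R'=f[R']$) $(a_1,\dots,a_r)\in f[R]$ and $(b_1,\dots,b_k)\in f[R']$, iff there exist $(c_1,\dots,c_r)\in R$ and $(d_1,\dots,d_k)\in R'$ with $a_i=f(c_i)$ and $b_j=f(d_j)$, iff $(a_1,\dots,a_r,b_1,\dots,b_k)=(f(c_1),\dots,f(c_r),f(d_1),\dots,f(d_k))$ with $(c_1,\dots,c_r,d_1,\dots,d_k)\in R_2$, iff $(a_1,\dots,a_r,b_1,\dots,b_k)\in f[R_2]$. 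Here the point is that $f$ acts on the concatenated tuple by acting on each half separately, so $f[R\times R']=f[R]\times f[R']$.

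I do not expect a genuine obstacle here; this is a routine verification of the kind already carried out in Lemma \ref{lem:lem1}. The only mild care needed is bookkeeping with the permutation $g$ in the $R_1$ case — making sure the reindexing is applied consistently to both the defining tuples of $R$ and the witness tuples produced by $f$ — and noting that since $f$ is a bijection, $f[R]=R$ is equivalent to the statement that membership is preserved in both directions, which is what licenses replacing $R$ by $f[R]$ in the equivalences. One could alternatively phrase the whole argument more slickly by observing that coordinate permutation and Cartesian product are special cases of a first-order quantifier-free interpretation, and such interpretations always commute with automorphisms, but the explicit equivalence-chain style matches the surrounding proofs and is the cleaner choice for this paper.
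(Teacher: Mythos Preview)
Your proposal is correct and follows essentially the same route as the paper's own (suppressed) proof: a direct verification that $f[R_1]=R_1$ and $f[R_2]=R_2$ via an equivalence chain, using $R=f[R]$, $R'=f[R']$, and the fact that coordinate permutation and concatenation commute with the pointwise action of $f$. The only cosmetic difference is that the paper parameterises the $R_1$ argument by starting from an arbitrary tuple $(a_1,\dots,a_r)\in R_1$ and introducing witnesses $b_i$ with $a_i=b_{g(i)}$, whereas you start from a tuple already written as $(a_{g(1)},\dots,a_{g(r)})$; the content is identical.
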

\begin{proof}
	Let $f$ be an arbitrary automorphism, $f^{-1}$ its inverse, and $g^{-1}$ the inverse of $g$. 
	For a relation $P$, let $g[P]$ be the relation obtained by permuting the tuples in $P$ with respect to $g$. It is easily seen that $f(g[P]) = g(f[P])$. To show that $R_{1}=f[R_{1}]$, we have for any $\bar{a}$
	\[
		\bar{a}\in R_{1} \Longleftrightarrow  g^{-1}(\bar{a})\in R \Longleftrightarrow g^{-1}(\bar{a})\in f[R]  \Longleftrightarrow \bar{a}\in g(f[R]) \Longleftrightarrow \bar{a}\in f(g[R]) \Longleftrightarrow \bar{a}\in f[R_1] .
	\]
	
	To show that $R_{2}=f[R_{2}]$, we have for any $\bar{a}$
	\[
	\begin{aligned}
		\bar{a}\bar{b}\in R_{2} & \Longleftrightarrow  \bar{a}\in R\text{ and }\bar{b}\in R'  \Longleftrightarrow \bar{a}\in f[R]\text{ and }\bar{b}\in f[R']\\
		 & \Longleftrightarrow f^{-1}(\bar{a}) \in R\text{ and } f^{-1}(\bar{b})\in R' \Longleftrightarrow f^{-1}(\bar{a} \bar{b})\in  R_{2} \Longleftrightarrow \bar{a}\bar{b}\in f[R_{2}].
	\end{aligned}
	\]
	 This completes the proof.\qed
\end{proof}

\begin{lemma}
	\label{lem:lem3}
	Suppose that $R$ is a $(k+r)$-ary invariant relation.
	Define
	\[
	\begin{array}{cl}
		R_{1}= & \{(a_{1},\dots,a_{r})\mid(b_{1},\dots,b_{k},a_{1},\dots,a_{r})\in R\text{ for all nodes }b_{1},\dots,b_{k}\}\\
		R_{2}= & \{(a_{1},\dots,a_{r})\mid\exists b_{1},\dots,b_{k} \text{ such that }(b_{1},\dots,b_{k},a_{1},\dots,a_{r})\in R\}.
	\end{array}
	\]
	Then $R_{1}$ and $R_{2}$ are also invariant relations.
\end{lemma}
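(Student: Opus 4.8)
The plan is to fix an arbitrary automorphism $f$ of the perfect binary tree $T$ and verify directly that $R_1=f[R_1]$ and $R_2=f[R_2]$, exploiting that $f$ is a bijection of $V$ (so $f^{-1}$ is again an automorphism) and that invariance of $R$ means $\bar c\in R$ iff $f(\bar c)\in R$ for every $(r+k)$-tuple $\bar c$, where $f(c_1,\dots,c_{r+k})=(f(c_1),\dots,f(c_{r+k}))$ as in the preamble of this section.

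First I would handle $R_2$, the existential projection. Given $\bar a=(a_1,\dots,a_r)\in R_2$, pick witnesses $\bar b=(b_1,\dots,b_k)$ with $(\bar b,\bar a)\in R$. Since $R$ is invariant, $f(\bar b,\bar a)=(f(\bar b),f(\bar a))\in R$, so $f(\bar b)$ witnesses $f(\bar a)\in R_2$; hence $f[R_2]\subseteq R_2$. Applying the same argument to the automorphism $f^{-1}$ gives $f^{-1}[R_2]\subseteq R_2$, i.e. $R_2\subseteq f[R_2]$, so $R_2=f[R_2]$. This is the same pattern used for the union case in Lemma \ref{lem:lem1} and in Lemma \ref{lem:lem2}.

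Next I would handle $R_1$, the universal projection. Suppose $\bar a\in R_1$, so $(\bar b,\bar a)\in R$ for every $k$-tuple $\bar b$ of nodes. To show $f(\bar a)\in R_1$, take an arbitrary $k$-tuple $\bar b$; since $f$ is surjective on $V$ (hence on $k$-tuples) we may write $\bar b=f(\bar b')$ for some $\bar b'$, and then $(\bar b,f(\bar a))=f(\bar b',\bar a)\in R$, because $(\bar b',\bar a)\in R$ by the hypothesis on $\bar a$ and $R$ is invariant. Thus $f(\bar a)\in R_1$ and $f[R_1]\subseteq R_1$; running this with $f^{-1}$ in place of $f$ yields the reverse inclusion, so $R_1=f[R_1]$.

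Alternatively, and avoiding any repeated work, one may note that $R_2$ is definable from $R_1$ by complementation: the universal projection of $\neg R$ equals $\neg R_2$, so $R_2=\neg\big(\{(a_1,\dots,a_r)\mid (b_1,\dots,b_k,a_1,\dots,a_r)\in\neg R\text{ for all }b_1,\dots,b_k\}\big)$. Since $\neg R$ is invariant by Lemma \ref{lem:lem1}, its universal projection is invariant by the first part of the present lemma, and hence so is its complement $R_2$, again by Lemma \ref{lem:lem1}. Either way, the only point that needs care is the use of surjectivity of automorphisms in the universal-quantifier case; there is no genuine obstacle, as the argument is a routine extension of the manipulations in Lemma \ref{lem:lem1}.
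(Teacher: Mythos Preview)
Your proof is correct and follows essentially the same approach as the paper: a direct verification that $R_1=f[R_1]$ and $R_2=f[R_2]$ for an arbitrary automorphism $f$, relying on the bijectivity of $f$ on tuples. The paper packages both inclusions into a single chain of biconditionals rather than proving one inclusion and invoking $f^{-1}$, and it treats $R_1$ before $R_2$, but the substance is identical; your closing remark deriving $R_2$ from the $R_1$ case via complementation and Lemma~\ref{lem:lem1} is a nice alternative that the paper does not mention.
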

\begin{proof}
	Let $f$ be an arbitrary automorphism, and $f^{-1}$ its inverse.
	To show that $R_{1}=f[R_{1}]$, we have for any $\bar{a}$
	\[
	\begin{aligned}
	\bar{a}\in R_{1} & \Longleftrightarrow \bar{b}\bar{a}\in R\text{ for any tuple }\bar{b} \Longleftrightarrow \bar{b}\bar{a}\in f[R]\text{ for any tuple }\bar{b} \\ 
	& \Longleftrightarrow  \bar{b}f^{-1}(\bar{a})\in R \text{ for any tuple }\bar{b} \Longleftrightarrow f^{-1}(\bar{a})\in R_{1} \Longleftrightarrow \bar{a}\in f[R_{1}]. 
	\end{aligned}
	\]

	To show that $R_{2}=f[R_{2}]$, we have for any $\bar{a}$
	\[
	\begin{aligned}
		\bar{a}\in R_{2} & \Longleftrightarrow \exists \bar{b}\text{ such that }\bar{b}\bar{a}\in R \Longleftrightarrow \exists \bar{b}\text{ such that }\bar{b}\bar{a}\in f[R] \\ 
		& \Longleftrightarrow \exists \bar{b'}\text{ such that }\bar{b'}f^{-1}(\bar{a})\in R \Longleftrightarrow  f^{-1}(\bar{a})\in R_{2} \Longleftrightarrow \bar{a}\in f[R_{2}].
	\end{aligned}
	\]
	This completes the proof.\qed
\end{proof}

Let $a,b$ be two nodes of a perfect binary tree $T$, we use $a\barwedge b$ and $d(a)$
to denote the least common ancestor of $a,b$ and the depth of $a$,
respectively. For example, in Fig.~\ref{fig-binarytree} there is a perfect binary tree in which $d(\mathbf{root})=0$, $d(a)=1$, $d(c)=d(e)=2$, and $c\barwedge e=\mathbf{root}$.
\begin{figure}
	\begin{center}
		\begin{tikzpicture}[level distance=0.8cm,
			level 1/.style={sibling distance=3cm},
			level 2/.style={sibling distance=1.5cm}]
			\node {\textbf{root}}
			child {node {a}
				child {node {c}}
				child {node {d}}
			}
			child {node {b}
				child {node {e}}
				child {node {f}}
			};
			\draw[dashed] (-0.4,-0.15) -- (3.3,-0.15);
			\draw[dashed] (-1.6,-0.95) -- (3.3,-0.95);
			\draw[dashed] (-2.3,-1.75) -- (3.3,-1.75);
			\node at (3.3,0) {0};
			\node at (3.3,-0.8) {1};
			\node at (3.3,-1.6) {2};
		\end{tikzpicture}
	\end{center}
	\caption{\label{fig-binarytree}A perfect binary tree of depth 3.}
\end{figure}

Let $(a_{1},\dots,a_{r})$ be an $r$-ary tuple of nodes, its characteristic tuple is defined as
\[
\begin{array}{rr}
	(a_{1},\dots,a_{r})^{*}= & (d(a_{1}),d(a_{1}\barwedge a_{2}),\dots,d(a_{1}\barwedge a_{r}),\,\\
	& d(a_{2}),d(a_{2}\barwedge a_{3}),\dots,d(a_{2}\barwedge a_{r}),\,\\
	& \dots,d(a_{r}))
\end{array}
\]
which is a $\frac{r(r+1)}{2}$-ary tuple of numbers.
Let $R$ be an invariant relation, the characteristic relation of $R$ is defined to be
\[
R^{*}=\{(a_{1},\dots,a_{r})^{*}\mid(a_{1},\dots,a_{r})\in R\}.
\]

\begin{proposition}\cite{lindell1991analysis} 
	Let $\bar{a}=(a_{1},\dots,a_{r})$ and $\bar{b}=(b_{1},\dots,b_{r})$ be two tuples, and $R$ an $r$-ary invariant relation of a perfect binary tree $T$.
	\begin{itemize}
		\item  $(a_{1},\dots,a_{r})^{*}=(b_{1},\dots,b_{r})^{*}$ iff there is an automorphism $f$ of $T$ such that $f(\bar{a})=\bar{b}$.
		\item If $(a_{1},\dots,a_{r})^{*}=(b_{1},\dots,b_{r})^{*}$, then $\bar{a}\in R$ iff $\bar{b}\in R$.
	\end{itemize}
	For any two invariant relations $R_{1}$ and $R_{2}$, $R_{1}=R_{2}$
	iff $R_{1}^{*}=R_{2}^{*}$.
\end{proposition}

\subsection{Tree encodings and characteristic structures}\label{sec-treeencode}

This section is devoted to the definitions of tree encodings and characteristic structures, and the propositions about the equivalent relationship between them on Datalog$^r$ programs, which will be used in the next section.

\begin{definition} \label{fz1}
	Let $T$ be a perfect binary tree, $R$ an $r$-ary relation on $T$. $R$ is a saturated relation if for any nodes $a_{1},\dots,a_{r}$, $b_{1},\dots,b_{r}$, whenever $d(a_{i})=d(b_{i})\,(1\leq i\leq r)$, then
	$(a_{1},\dots,a_{r})\in R \mbox{ iff } (b_{1},\dots,b_{r})\in R$.
\end{definition}

The following proposition can be proved easily from the definitions of invariant relations and saturated relations.
\begin{proposition}
	A saturated relation is also an invariant relation.
\end{proposition}

From now on we make the assumption: $\tau$ is the vocabulary $\{R_{1},\dots,R_{k}\}$, and $\tau'=\tau\cup\{\mathbf{root},E\}$, where $\mathbf{root}$ is a constant symbol and $E$ is a binary relation symbol that is not in $\tau$. We define a class of $\tau'$-structures
\[
\begin{array}{r}
	\mathcal{T}=\{\langle V,\mathbf{root},E,R_{1},\dots,R_{k}\rangle \mid\langle V,E,\mathbf{root}\rangle \textrm{ is a perfect binary tree}, \\
	R_{1},\dots,R_{k} \textrm{ are saturated relations on it} \}.
\end{array}
\]

\begin{definition}
	Let $\mathbf{A}=\langle \{0,1,\dots,h-1\},R_{1}^{A},\dots,R_{k}^{A}\rangle$
	be a $\tau$-structure. The tree encoding of $\mathbf{A}$ is a $\tau'$-structure $C(\mathbf{A})=\langle V,\mathbf{root},E,R_{1}^{T},\dots,R_{k}^{T}\rangle \in \mathcal{T}$, such that $\langle V,E,\mathbf{root}\rangle$ is a perfect binary tree of depth $h$, and for any relation symbol $R_{i}\,(1\leq i\leq k)$ and any nodes $a_{1},\dots,a_{r_i}\in V$,
	\[ 
	C(\mathbf{A})\models R_{i}^{T}a_{1}\cdots a_{r_{i}}
	\mbox{ iff } \mathbf A\models R_{i}^{A}d(a_{1})\cdots d(a_{r_{i}})
	\]
	where $r_i$ is the arity of $R_i$, and $d(a_{j})\,(1\leq j \leq r_i)$ is the depth of $a_{j}$.
\end{definition}

Roughly speaking, $C(\mathbf{A})$ encodes $\mathbf{A}$ in a tree, but its size is exponentially larger. Conversely, given a $\tau'$-structure $\mathbf{T}=\langle V,\mathbf{root},E,R_{1}^{T},\dots,R_{k}^{T}\rangle\in \mathcal{T}$, we can compute the $\tau$-structure $\mathbf{A}$ encoded by $\mathbf{T}$ as follows: 
\begin{description}
	\item[(1)] The domain is $\{0,\dots, h-1\}$, where $h$ is the depth of $\mathbf{T}$;
	\item[(2)] For each $i=1,\dots,k$,
	\[ 
	R^A_i = \{(d(a_1), \dots, d(a_{r_i}))\mid \exists a_1,\dots, a_{r_i}\in V  \text{ such that } \mathbf{T}\models R_i^Ta_1\cdots a_{r_i}\}.
	\]
\end{description}

We use $C^{-1}(\mathbf{T})$ to denote the corresponding $\tau$-structure $\mathbf{A}$ encoded by $\mathbf{T}$.
Let $\mathrm{FUL}_m = V^m$ be a relation of arity $m$, where $m\geq 1$ and $V$ is the domain of $\mathbf{T}$. Define the vocabulary
\[
\sigma =\{\mathbf{0},\mathrm{SUCC},R_{\neq},R_{\neg e},\mathrm{FUL}_{m}^{*}\} \cup \{R_{1}^{*},\dots,R_{k}^{*},(\neg R_{1})^{*},\dots,(\neg R_{k})^{*}\}
\]
where $\mathrm{FUL}_{m}^{*}$ has arity $\frac{m(m+1)}{2}$, $R_{\neq}$ and $R_{\neg e}$ have arity 3, $R_i^*$ or $(\neg R_{1})^{*}$ has arity $\frac{r_i(r_i+1)}{2}$ ($1\leq i \leq k$ and $r_i$ is the arity of $R_i$).

\begin{definition}\label{fz4}
	Given a $\tau'$-structure $\mathbf{T}=\langle V,\mathbf{root},E,R_{1},\dots,R_{k}\rangle\in\mathcal{T}$, the characteristic structure $S_{\mathbf{T}}$ of $\mathbf{T}$ is a $\sigma$-structure
	\[
	\langle\{0,1,\dots,h-1\},\mathbf{0},\mathrm{SUCC},R_{\neq},R_{\neg e},\mathrm{FUL}_{m}^{*}, R_{1}^{*},\dots,R_{k}^{*},(\neg R_{1})^{*},\dots,(\neg R_{k})^{*}\rangle
	\]
	where $h$ is the depth of $\mathbf{T}$, $\mathbf{0}$ is a constant interpreted by $0$, $\mathrm{SUCC}$ is the successor relation on the domain, and $R_{\neq},R_{\neg e},\mathrm{FUL}_{m}^{*}$, $R_{1}^{*},\dots,R_{k}^{*}$, $(\neg R_{1})^{*},\dots,(\neg R_{k})^{*}$ are the characteristic relations of $\neq,(\neg E)$, $\mathrm{FUL}_{m},R_{1},\dots,R_{k}$, $\neg R_{1},\dots,\neg R_{k}$, respectively.
\end{definition}

In the following we show that for every Datalog$^r$ program $\mathrm{\Pi}$ on the tree encodings, there is a Datalog$^r$ program $\mathrm{\Pi}^{*}$ on the corresponding characteristic structures such that $\mathrm{\Pi}^{*}$ simulates $\mathrm{\Pi}$. 
More precisely, $\mathrm{\Pi}^{*}$ operates the characteristic relations of the relations in $\mathrm{\Pi}$.
Let $\mathrm{\Pi}=\{\gamma_{1},\dots,\gamma_{s}\}$ be a Datalog$^r$ program
on $\mathcal{T}$.  Suppose $X_{1},\dots,X_{w}$ are all intensional
relation symbols in $\mathrm{\Pi}$ and for each rule $\gamma_{i}$, let $n_{\gamma_{i}}$
be the number of free variables occurring in $\gamma_{i}$. Set
\[
m= \max\{n_{\gamma_{1}},\dots,n_{\gamma_{s}},\mathrm{arity}(R_{1}),\dots,\mathrm{arity}(R_{k}),\mathrm{arity}(X_{1}),\dots,\mathrm{arity}(X_{w})\}.
\]
We shall construct, based on $\mathrm{\Pi}$, a Datalog$^r$ program $\mathrm{\Pi}^{*}$ such that for any Datalog$^r$ formula $(\mathrm{\Pi},P)$, there exists a Datalog$^r$ formula $(\mathrm{\Pi}^{*},P^{*})$, and $\mathbf{T}\models(\mathrm{\Pi},P)$ iff $S_{\mathbf{T}}\models(\mathrm{\Pi}^{*},P^{*})$
for any $\mathbf{T}\in\mathcal{T}$, where $P$ and $P^{*}$ are 0-ary.

Every element of $\mathbf{T}$ is a node of a perfect binary tree,
while every element of $S_{\mathbf{T}}$ is a number which can be treated as the depth of some node. Hence, for each variable $x$ in $\mathrm{\Pi}$, we introduce a new variable $i_{x}$, and for any two variables $x_1, x_2$ in $\mathrm{\Pi}$, we introduce a new variable $i_{x_1\barwedge x_2}$.
For a tuple of variables $\bar{x}=x_{1}\cdots x_{r}$, we use the following abbreviations:
\[
\begin{aligned}
	(\bar{x})^{*} & = i_{x_{1}}i_{x_{1}\barwedge x_{2}}\cdots i_{x_{1}\barwedge x_{r}}i_{x_{2}}i_{x_{2}\barwedge x_{3}}\cdots i_{x_{r-1}\barwedge x_{r}}i_{x_{r}},\\
	\forall(\bar{x})^{*} &  =  \forall i_{x_{1}}\forall i_{x_{1}\barwedge x_{2}}\cdots\forall i_{x_{1}\barwedge x_{r}}\forall i_{x_{2}}\forall i_{x_{2}\barwedge x_{3}}\cdots\forall i_{x_{r-1}\barwedge x_{r}}\forall i_{x_{r}}.
\end{aligned}
\]
Without loss of generality, we treat $i_{u \barwedge v}$ and $i_{v \barwedge u}$ as the same variable. 
Additionally, we assume that the 0-ary relation is also an invariant relation, and the characteristic relation of a 0-ary relation is itself.

First we construction a quasi-Datalog$^r$ program $\mathrm{\Pi}'$ as follows. For each rule $\beta\leftarrow\alpha_{1},\dots,\alpha_{l}$ in $\mathrm{\Pi}$, suppose that $v_{1},\dots,v_{n}$ are the free variables in it, we add the formula
\[
\mathrm{FUL}_{m}v_{1}v_{2}\cdots v_{n-1}v_{n}v_{n}\cdots v_{n}
\]
to the body and obtain a new rule
\[
\beta\leftarrow\alpha_{1},\dots,\alpha_{l},\mathrm{FUL}_{m}v_{1}v_{2}\cdots v_{n-1}v_{n}\cdots v_{n}.
\]
For each new rule, we
\begin{itemize}
	
	\item replace $x=y$ by $i_{x}=i_{x\barwedge y},i_{x\barwedge y}=i_{y}$
	(reason: $d(x)=d(x\barwedge y)=d(y)$), for constant $\mathbf{root}$,
	we replace $i_{\mathbf{root}}$ by constant $\mathbf{0}$, and replace $i_{\mathbf{root}\barwedge x}$ also
	by $\mathbf{0}$, since $\mathbf{root}\barwedge a=\mathbf{root}$ for any node $a$;
	
	\item replace $Exy$ by $i_{x}=i_{x\barwedge y},\mathrm{SUCC}i_{x\barwedge y}i_{y}$
	(reason: $d(y)=d(x\barwedge y)+1=d(x)+1$);
	
	\item replace $x\neq y$ by $R_{\neq}i_{x}i_{x\barwedge y}i_{y}$ (reason:
	$R_{\neq}$ is the characteristic relation of $\neq$);
	
	\item replace $\neg Exy$ by $R_{\neg e}i_{x}i_{x\barwedge y}i_{y}$ (reason:
	$R_{\neg e}$ is the characteristic relation of $\neg E$);
	
	\item replace $P\bar{x}$ by $P^{*}(\bar{x})^{*}$, where $P$ is in  $\{R_{1},\dots,R_{k},\mathrm{FUL}_{m}\}$, or an intensional relation symbol;
	
	\item replace $\neg R\bar{x}$ by $(\neg R)^{*}(\bar{x})^{*}$,
	where $R$ is a symbol in $\{R_{1},\dots,R_{k}\}$;
	
	\item replace $\forall y_{1}\cdots\forall y_{t}Py_{1}\cdots y_{t}z_{1}\cdots z_{s}$ by
	\[
	\Psi_{P}=\left(\begin{array}{l}
		\mathrm{FUL}_{m}^{*}(z_{1}z_{2}\cdots z_{s-1}z_{s}\cdots z_{s})^{*}\wedge\\
		\forall(y_{1}\cdots y_{t})^{*}\forall i_{y_{1}\barwedge z_{1}}\cdots\forall i_{y_{1}\barwedge z_{s}}\forall i_{y_{2}\barwedge z_{1}}\cdots\forall i_{y_{t}\barwedge z_{s-1}}\forall i_{y_{t}\barwedge z_{s}}\\
		\bigr(\mathrm{FUL}_{m}^{*}(y_{1}\cdots y_{t}z_{1}\cdots z_{s}z_{s}\cdots z_{s})^{*} \rightarrow
		P^{*}(y_{1}\cdots y_{t}z_{1}\cdots z_{s})^{*}\bigl)
	\end{array}\right)
	\]
	where $P$ is an intensional relation symbol.
\end{itemize}

By adding $\mathrm{FUL}_m$ to each rule of $\mathrm{\Pi}$ and replacing it with $\mathrm{FUL}_m^{*}$ in the translation, we can restrict $\mathrm{\Pi}'$ to characteristic tuples, i.e., only invariant relations are considered. $\mathrm{\Pi}'$ is not a Datalog$^r$ program because of $\Psi_P$. Note that $\Psi_P$ is equivalent to the Datalog$^r$ formula $(\mathrm{\Pi}_{1},Q_{2})\bar{t}$, where
\[
\begin{aligned}
	\mathrm{\Pi}_{1}: Q(y_{1}\cdots y_{t}z_{1}\cdots z_{s})^{*} & \leftarrow \neg \mathrm{FUL}_{m}^{*}(y_{1}\cdots y_{t}z_{1}\cdots z_{s}z_{s}\cdots z_{s})^{*};\\
	Q(y_{1}\cdots y_{t}z_{1}\cdots z_{s})^{*} & \leftarrow  P^{*}(y_{1}\cdots y_{t}z_{1}\cdots z_{s})^{*};\\
	Q_{1}(z_{1}\cdots z_{s})^{*} & \leftarrow  \forall(y_{1}\cdots y_{t})^{*}\forall i_{y_{1}\barwedge z_{1}}\cdots\forall i_{y_{1}\barwedge z_{s}}\\
	&\qquad\! \forall i_{y_{2}\barwedge z_{1}}\cdots\forall i_{y_{t}\barwedge z_{s-1}}\forall i_{y_{t}\barwedge z_{s}}Q(y_{1}\cdots y_{t}z_{1}\cdots z_{s})^{*};\\
	Q_{2}(z_{1}\cdots z_{s})^{*} & \leftarrow  Q_{1}(z_{1}\cdots z_{s})^{*},\mathrm{FUL}_{m}^{*}(z_{1}z_{2}\cdots z_{s-1}z_{s}\cdots z_{s})^{*}.
\end{aligned}
\]
The Datalog$^r$ program $\mathrm{\Pi}^{*}$ can be obtained by adding $\mathrm{\Pi}_{1}$ to $\mathrm{\Pi}'$ and changing $\Psi_{P}$ to $Q_{2}(z_1\cdots z_s)^*$.

\begin{remark}
	We cannot replace $\forall y_{1}\cdots\forall y_{t}Py_{1}\cdots y_{t}z_{1}\cdots z_{s}$ directly by 
	\[
	\forall(y_{1}\cdots y_{t})^{*}\forall i_{y_{1}\barwedge z_{1}}\cdots\forall i_{y_{1}\barwedge z_{s}}\forall i_{y_{2}\barwedge z_{1}}\cdots\forall i_{y_{t}\barwedge z_{s-1}}\forall i_{y_{t}\barwedge z_{s}} P^{*}(y_{1}\cdots y_{t}z_{1}\cdots z_{s})^{*}
	\]
	since there may be $\mathbf{T}\in\mathcal{T}$, $\bar{a}\in \mathbf{T}$,
	and an invariant relation $P$ such that
	\[
	\begin{aligned}
		\mathbf{T} & \vDash \forall y_{1}\cdots\forall y_{t}P y_{1}\cdots y_{t}z_{1}\cdots z_{s}[\bar{a}], \text{and}\\
		S_{\mathbf{T}} & \nvDash \forall(y_{1}\cdots y_{t})^{*}\forall i_{y_{1}\barwedge z_{1}}\cdots\forall i_{y_{1}\barwedge z_{s}}\forall i_{y_{2}\barwedge z_{1}}\cdots\forall i_{y_{t}\barwedge z_{s-1}}\forall i_{y_{t}\barwedge z_{s}}\\
		& \quad P^{*}(y_{1}\cdots y_{t}z_{1}\cdots z_{s})^{*}[(\bar{a})^{*}].
	\end{aligned}
	\]
	For example, let $\mathbf{T}$ be the structure with the perfect binary tree of Fig.~\ref{fig-binarytree} and relation
	\[
	 P=\{(\mathbf{root},\mathbf{root}),(\mathbf{root},a),(\mathbf{root},b),(\mathbf{root},c),(\mathbf{root},d),(\mathbf{root},e),(\mathbf{root},f)\}.
	\]
	Obviously, we have $\mathbf{T} \vDash \forall y P(\mathbf{root},y)$. But $S_{\mathbf{T}} \nvDash P^{*}(\mathbf{0}, 1,1) $ since $(\mathbf{0}, 1,1)$ is not the characteristic tuple of any tuple of nodes in $\mathbf{T}$.
	This problem can be solved by changing $\forall y_{1}\cdots\forall y_{t}Py_{1}\cdots y_{t}z_{1}\cdots z_{s}$ to the equivalent formula
	\[
	\begin{array}{l}
		\mathrm{FUL}_{m}z_{1}\cdots z_{s-1}z_{s}z_{s}\cdots z_{s}\,\wedge\\
		\forall y_{1}\cdots\forall y_{t}(\mathrm{FUL}_{m}y_{1}\cdots y_{t}z_{1}\cdots z_{s}z_{s}\cdots z_{s}\rightarrow Py_{1}\cdots y_{t}z_{1}\cdots z_{s}).
	\end{array}
	\]
	We replace $\mathrm{FUL}_{m}$ and $P$ by their characteristic relations $\mathrm{FUL}_{m}^*$ and $P^*$, respectively, to obtain $\Psi_{P}$. This guarantees that only characteristic tuples are considered.
\end{remark}

\begin{example}
The following Datalog$^r$ program $\mathrm{\Pi}$ computes the transitive closure $R$ of edges $E$
	\[
	\begin{aligned}
		\mathrm{\Pi}: R x_1x_2 & \leftarrow E x_1x_2;\\
		 R x_1x_3 & \leftarrow  R x_1x_2, E x_2x_3.
	\end{aligned}
	\]
	The corresponding Datalog$^r$ program $\mathrm{\Pi}^*$ below computes the characteristic relation $R^*$ of $R$.
	\[
	\begin{aligned}
		\mathrm{\Pi}^*: R^* i_{x_1} i_{x_1\barwedge x_2} i_{x_2} & \leftarrow i_{x_1}=i_{x_1\barwedge x_2},\mathrm{SUCC}i_{x_1\barwedge x_2}i_{x_2}, \\
		& \qquad\! \mathrm{FUL}_3^* i_{x_1} i_{x_1\barwedge x_2} i_{x_1\barwedge x_2} i_{x_2} i_{x_2} i_{x_2};\\
		R^* i_{x_1} i_{x_1\barwedge x_3} i_{x_3} & \leftarrow  R^* i_{x_1} i_{x_1\barwedge x_2} i_{x_2}, 
		 i_{x_2}=i_{x_2\barwedge x_3},\mathrm{SUCC}i_{x_2\barwedge x_3}i_{x_3}, \\
		 & \qquad\! \mathrm{FUL}_3^* i_{x_1} i_{x_1\barwedge x_2} i_{x_1\barwedge x_3} i_{x_2} i_{x_2\barwedge x_3} i_{x_3}.
	\end{aligned}
	\]
\end{example}

\begin{lemma}
	\label{lem:lem4} Given $\psi_{P}=\forall y_{1}\cdots\forall y_{t}Py_{1}\cdots y_{t}z_{1}\cdots z_{s}$, a structure $\mathbf{T}\in \mathcal{T}$, let $\Psi_{P}$ be defined as above, and $Q_{1}=\{\bar{a}\mid \mathbf{T}\models\psi_{P}[\bar{a}]\}$,
	$Q_{2}=\{\bar{e}\mid S_{\mathbf{T}}\models\Psi_{P}[\bar{e}]\}$.
	If $P$ is an invariant relation on $\mathbf{T}$, then $(Q_{1})^{*}=Q_{2}$.
\end{lemma}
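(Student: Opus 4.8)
The plan is to first record that $Q_{1}$ is invariant, so that the characteristic relation $(Q_{1})^{*}$ is well-defined, and then to prove the two inclusions $(Q_{1})^{*}\subseteq Q_{2}$ and $Q_{2}\subseteq(Q_{1})^{*}$ separately. Invariance of $Q_{1}$ is immediate from Lemma~\ref{lem:lem3} applied to the invariant relation $R$, since $Q_{1}=\{\bar a\mid(\bar b,\bar a)\in R\text{ for all }\bar b\}$. The technical heart of both inclusions is a single bookkeeping claim, which I would isolate first: for any tuple of distinct variables $\bar x=x_{1}\cdots x_{p}$ with $p\le m$ and any assignment $\bar c$ to the variables occurring in $(\bar x)^{*}$, the guard $FUL_{m}^{*}(x_{1}\cdots x_{p}x_{p}\cdots x_{p})^{*}$ (the tuple padded to length $m$) is satisfied by $\bar c$ in $S_{\mathbf{T}}$ if and only if $\bar c=(\bar a)^{*}$ for some tuple of nodes $\bar a=a_{1}\cdots a_{p}$ of $\mathbf{T}$; and, in that case, $S_{\mathbf{T}}\models R^{*}(x_{1}\cdots x_{p})^{*}[\bar c]$ iff $(a_{1},\dots,a_{p})\in R$. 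The first part uses that $FUL_{m}=V^{m}$, the definition of characteristic relations, the convention $i_{u\barwedge v}=i_{v\barwedge u}$, and the fact that the padding forces the last $m-p+1$ coordinates of any witnessing $m$-tuple of nodes to coincide (so it is a genuine padded $p$-tuple); the ``in that case'' part is the definition of $R^{*}$ together with invariance of $R$ and the third proposition recalled from~\cite{Lin}.

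For $(Q_{1})^{*}\subseteq Q_{2}$: let $\bar e=(\bar a)^{*}$ with $\bar a=(a_{1},\dots,a_{s})\in Q_{1}$, i.e.\ $(b_{1},\dots,b_{t},a_{1},\dots,a_{s})\in R$ for all nodes $b_{1},\dots,b_{t}$. The first conjunct of $\Phi_{R}$ holds at $\bar e$ by the claim (with $\bar x=z_{1}\cdots z_{s}$), since $(a_{1},\dots,a_{s})$ is a node-tuple. For the second conjunct, fix any assignment to the universally quantified $y$-variables satisfying the guard $FUL_{m}^{*}(y_{1}\cdots y_{t}z_{1}\cdots z_{s}z_{s}\cdots z_{s})^{*}$; by the claim this assignment, together with $\bar e$, equals $(\bar u)^{*}$ for some node-tuple $\bar u=(u_{1},\dots,u_{t},u_{t+1},\dots,u_{t+s})$ whose $z$-part satisfies $(u_{t+1},\dots,u_{t+s})^{*}=(\bar a)^{*}$, so by~\cite{Lin} there is an automorphism $g$ with $g(a_{j})=u_{t+j}$. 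Instantiating $\mathbf{T}\models\phi_{R}[\bar a]$ at $(g^{-1}(u_{1}),\dots,g^{-1}(u_{t}))$ and applying $g$, invariance of $R$ gives $\bar u\in R$; hence $R^{*}(y_{1}\cdots y_{t}z_{1}\cdots z_{s})^{*}$ holds at the chosen assignment. As the assignment was arbitrary, $S_{\mathbf{T}}\models\Phi_{R}[\bar e]$, i.e.\ $\bar e\in Q_{2}$.

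For $Q_{2}\subseteq(Q_{1})^{*}$: let $S_{\mathbf{T}}\models\Phi_{R}[\bar e]$. By the first conjunct and the claim (with $\bar x=z_{1}\cdots z_{s}$), $\bar e=(\bar a)^{*}$ for some node-tuple $\bar a=(a_{1},\dots,a_{s})$; it remains to show $\bar a\in Q_{1}$. Fix arbitrary nodes $b_{1},\dots,b_{t}$. The characteristic tuple of the padded node-tuple $(b_{1},\dots,b_{t},a_{1},\dots,a_{s},a_{s},\dots,a_{s})$ is an assignment to the variables of $(y_{1}\cdots y_{t}z_{1}\cdots z_{s}z_{s}\cdots z_{s})^{*}$ that extends $\bar e$ and satisfies the guard (directly, since $FUL_{m}=V^{m}$). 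So the second conjunct of $\Phi_{R}$ gives that $R^{*}(y_{1}\cdots y_{t}z_{1}\cdots z_{s})^{*}$ holds at this assignment, i.e.\ $(b_{1},\dots,b_{t},a_{1},\dots,a_{s})^{*}\in R^{*}$, whence $(b_{1},\dots,b_{t},a_{1},\dots,a_{s})\in R$ by the ``in that case'' part of the claim. Since $b_{1},\dots,b_{t}$ were arbitrary, $\mathbf{T}\models\phi_{R}[\bar a]$, so $\bar a\in Q_{1}$ and $\bar e=(\bar a)^{*}\in(Q_{1})^{*}$.

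The step I expect to be the main obstacle is the bookkeeping claim, and specifically the non-trivial ``only if'' direction of its first part: one must match the syntactic variable-tuples hidden in the abbreviations $(\bar x)^{*}$ and $\forall(\bar x)^{*}$ — with their repeated occurrences arising from the $z_{s}$-padding and the identification $i_{u\barwedge v}=i_{v\barwedge u}$ — against actual characteristic tuples, and verify that a tuple of numbers lying in $FUL_{m}^{*}$ really is the characteristic tuple of a genuine node-tuple of $\mathbf{T}$ (of the right shape). This is precisely the subtlety flagged in the Remark preceding the lemma, and is the reason $\phi_{R}$ is translated as $\Phi_{R}$ rather than by the naive translation. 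Once the claim is established, both inclusions are routine combinations of it with invariance of $R$, Lemma~\ref{lem:lem3}, and the propositions of~\cite{Lin}.
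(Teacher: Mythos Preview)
Your proposal is correct and follows essentially the same route as the paper's proof: first observe $Q_{1}$ is invariant (the paper cites Lemma~\ref{lem:lem1} and Lemma~\ref{lem:lem3}), then prove the two inclusions, with the $FUL_{m}^{*}$-guard doing exactly the work you isolate in your bookkeeping claim. Your write-up is in fact more explicit than the paper's --- the paper dispatches the forward inclusion with ``by the definition of $\Phi_{R}$'' where you spell out the automorphism argument, and it does $Q_{2}\subseteq(Q_{1})^{*}$ by contradiction rather than directly --- but the underlying argument is the same.
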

\begin{proof}
	Because $P$ is an invariant relation, by Lemma~\ref{lem:lem3} and the definition of $Q_{1}$, we know that $Q_{1}$ is also an invariant relation. We first show that $(Q_{1})^{*}\subseteq Q_{2}$. Suppose that
	$\bar{e}\in(Q_{1})^{*}$ for some $\bar{e}\in S_{\mathbf{T}}$, there must
	exist a tuple $\bar{a}$ from $\mathbf{T}$ such that $\bar{a}\in Q_{1}$, $(\bar{a})^{*}=\bar{e}$,
	and $\bar{b}\bar{a}\in P$ for all tuples $\bar{b}$ of $\mathbf{T}$,
	i.e.,
	\[
	\begin{array}{ll}
		\mathbf{T}\models & \Bigl(\mathrm{FUL}_{m}z_{1}\cdots z_{s-1}z_{s}z_{s}\cdots z_{s}\wedge\\
		& \forall y_{1}\cdots\forall y_{t}(\mathrm{FUL}_{m}y_{1}\cdots y_{t}z_{1}\cdots z_{s}z_{s}\cdots z_{s}\rightarrow Py_{1}\cdots y_{t}z_{1}\cdots z_{s})\Bigr)[\bar{a}].
	\end{array}
	\]
	By the definition of $\Psi_P$  we see that $S_{\mathbf{T}}\models\Psi_{P}[(\bar{a})^{*}]$, which implies $\bar{e}\in Q_{2}$.
	
	To prove $Q_{2}\subseteq(Q_{1})^{*}$, consider an arbitrary tuple $\bar{e}\in S_{\mathbf{T}}$ such that $\bar{e}\in Q_{2}$.
	By the definition of $Q_{2}$ and $\Psi_{P}$, we have $S_{\mathbf{T}}\models \mathrm{FUL}_m(z_1\cdots z_sz_s\cdots z_s)^*[\bar{e}]$, so there exists a tuple $\bar{a}$ of $\mathbf{T}$
	such that $\bar{e}=(\bar{a})^{*}$.  On the contrary, assume $\bar{a}\notin Q_{1}$, then there is a tuple $\bar{b}$ such that $\bar{b}\bar{a}\notin P$. Because $P$ is an invariant relation, for any tuples $\bar{b}'$ and $\bar{a}'$, if $(\bar{b}\bar{a})^{*}=(\bar{b}'\bar{a}')^{*}$ then $\bar{b}'\bar{a}'\notin P$. Combing that $P^{*}$ is the characteristic relation of $P$ we conclude that
	\begin{align}
		S_{\mathbf{T}} & \nvDash P^{*}(y_{1}\cdots y_{t}z_{1}\cdots z_{s})^{*}[(\bar{b}\bar{a})^{*}]\label{eq:lem1}, \text{and}\\
		S_{\mathbf{T}} & \vDash \mathrm{FUL}_{m}^{*}(y_{1}\cdots y_{t}z_{1}\cdots z_{s}z_{s}\cdots z_{s})^{*}[(\bar{b}\bar{a})^{*}].\label{eq:lem2}
	\end{align}
	(\ref{eq:lem1}) and (\ref{eq:lem2}) give $S_{\mathbf{T}}\nvDash\Psi_{P}[(\bar{a})^{*}]$. Hence, $\bar{e}\notin Q_{2}$, contrary to the assumption that $\bar{e}\in Q_{2}$. Therefore, $\bar{a}$ must be in $Q_1$, which implies $\bar{e}\in (Q_1)^*$. \qed
\end{proof}

Let $P$ be an intensional relation symbol in $\mathrm{\Pi}$, and $\mathbf{T}$ a structure in $\mathcal{T}$. We use $P_{(n)}\,(n>0)$ to denote the relation obtained in the $n$-th evaluation of $\mathrm{\Pi}$ on $\mathbf{T}$ for $P$, and $P^{\mathbf{T}[\mathrm{\Pi}]}$ to denote the relation obtained by applying $\mathrm{\Pi}$ on $\mathbf{T}$ for $P$, i.e., the fixpoint of the sequence $P_{(0)},P_{(1)},P_{(2)},\dots$

\begin{proposition}\label{prop:treeEqualstring} 
	For any intensional relation symbol $P$ in $\mathrm{\Pi}$ and any $\mathbf{T}\in\mathcal{T}$, $P^{\mathbf{T}[\mathrm{\Pi}]}$ is an invariant relation on $\mathbf{T}$  and $(P^{\mathbf{T}[\mathrm{\Pi}]})^{*}=(P^{*})^{S_{\mathbf{T}}[\mathrm{\Pi}^{*}]}$.  Moreover, if $P$
	is a 0-ary intensional relation symbol, then $\mathbf{T}\models(\mathrm{\Pi},P)$ iff $S_{\mathbf{T}}\models(\mathrm{\Pi}^{*},P^{*})$.
\end{proposition}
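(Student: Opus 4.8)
The plan is to prove the two displayed claims together, by induction on the stages of the least simultaneous fixed-point. For $\mathbf{T}\in\mathcal{T}$ write $(X_i)^{\mathbf{T}}_{(n)}$ for the $n$-th stage of the fixed-point induced by $\mathrm{\Pi}$ on $\mathbf{T}$, and $((X_i)^{*})^{S_{\mathbf{T}}}_{(n)}$ for the $n$-th stage of the fixed-point induced by $\mathrm{\Pi}^{*}$ on $S_{\mathbf{T}}$. Since every intentional symbol occurs only positively in every $\Phi_{R}$, the body formulas of $\mathrm{\Pi}^{*}$ are monotone in the intentional relations, so this stage semantics is well defined for $\mathrm{\Pi}^{*}$ as written and agrees, on the symbols $P^{*}$, with the stage semantics of the genuine $\dlog^r$ program $\mathrm{\Pi}_{2}$ (by the equivalence of $(\mathrm{\Pi}^{*},X)$ and $(\mathrm{\Pi}_{2},X)$ already noted); so we may argue with $\mathrm{\Pi}^{*}$ directly. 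The induction hypothesis at stage $n$ is: for every intentional symbol $X_i$ of $\mathrm{\Pi}$, the relation $(X_i)^{\mathbf{T}}_{(n)}$ is invariant on $\mathbf{T}$, and its characteristic relation equals $((X_i)^{*})^{S_{\mathbf{T}}}_{(n)}$ (using the convention $\emptyset^{*}=\emptyset$, $\{\emptyset\}^{*}=\{\emptyset\}$ for $0$-ary symbols). The base case $n=0$ is immediate, all relations being empty.

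For the step I would isolate a ``firing correspondence'' for a single rule $\gamma=(P\bar{x}_{P}\leftarrow\alpha_{1},\dots,\alpha_{l})$ of $\mathrm{\Pi}$ and its translate $\gamma^{*}$ in $\mathrm{\Pi}^{*}$, evaluated with the stage-$n$ relations on each side. Given a node-assignment $\mu$ of the free variables of $\gamma$, let $\mu^{*}$ send each auxiliary variable $i_{x}$ to $d(\mu(x))$ and each $i_{x\barwedge y}$ to $d(\mu(x)\barwedge\mu(y))$ (consistently, since we identify $i_{u\barwedge v}$ with $i_{v\barwedge u}$). The correspondence has two halves: (i) $\gamma$ fires at $\mu$ and outputs $\mu(\bar{x}_{P})$ iff $\gamma^{*}$ fires at $\mu^{*}$ and outputs $(\mu(\bar{x}_{P}))^{*}$; and (ii) every assignment $\rho$ at which $\gamma^{*}$ fires is of the form $\mu^{*}$ for a node-assignment $\mu$. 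Part (i) is verified atom by atom against the translation table in the construction: the equality, edge, $\neq$ and $\neg E$ atoms are handled by the elementary depth/least-common-ancestor identities together with the fact that $R_{\neq}$, $R_{\neg e}$ are the characteristic relations of $\neq$, $\neg E$; atoms $R_j\bar{x}$ and $\neg R_j\bar{x}$ with $R_j\in\tau$ are handled by the third Lindell property quoted above, since $R_j$ is saturated and $\neg R_j$ is invariant by Lemma~\ref{lem:lem1}, and $R_j^{*}$, $(\neg R_j)^{*}$ are exactly their characteristic relations; an intentional atom $X_i\bar{x}$ is handled by the induction hypothesis (invariance of $(X_i)^{\mathbf{T}}_{(n)}$ plus the equality of characteristic relations); a universally quantified atom $\forall\bar{y}\,X_i\bar{y}\bar{z}$ is handled by Lemma~\ref{lem:lem4} applied to $R=(X_i)^{\mathbf{T}}_{(n)}$ (legitimate, as this relation is invariant by the hypothesis and its characteristic relation is the current stage of $X_i^{*}$); and the padding atom $FUL_{m}v_{1}\cdots v_{n}v_{n}\cdots v_{n}$ is vacuously true on $\mathbf{T}$ and translates to $FUL_{m}^{*}(v_{1}\cdots v_{n}v_{n}\cdots v_{n})^{*}$, which holds at $\mu^{*}$ because the argument tuple is then the characteristic tuple of an $m$-tuple of nodes. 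Part (ii) is exactly what the $FUL_{m}^{*}$ conjunct buys, as explained in the Remark above: if $\gamma^{*}$ fires at $\rho$ then $S_{\mathbf{T}}\models FUL_{m}^{*}(v_{1}\cdots v_{n}v_{n}\cdots v_{n})^{*}[\rho]$, so the $\rho$-values of the $i_{v_{j}}$ and $i_{v_{j}\barwedge v_{k}}$ form the characteristic tuple of some $m$-tuple of nodes; the repeated last coordinate forces that tuple to collapse to an $n$-tuple $\mu(v_{1}),\dots,\mu(v_{n})$, whence $\rho=\mu^{*}$.

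Granting the firing correspondence, the step is bookkeeping: $\bar{a}\in P^{\mathbf{T}}_{(n+1)}$ iff some rule of $\mathrm{\Pi}$ with head $P$ fires at some $\mu$ with $\mu(\bar{x}_{P})=\bar{a}$, iff (by (i) and (ii)) some rule of $\mathrm{\Pi}^{*}$ with head $P^{*}$ fires at some $\rho$ with $\rho((\bar{x}_{P})^{*})=(\bar{a})^{*}$, iff $(\bar{a})^{*}\in(P^{*})^{S_{\mathbf{T}}}_{(n+1)}$; hence $(P^{\mathbf{T}}_{(n+1)})^{*}=(P^{*})^{S_{\mathbf{T}}}_{(n+1)}$. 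And $P^{\mathbf{T}}_{(n+1)}$ is invariant because the defining formula $\phi_{P}$ is built from invariant relations --- the $R_j$, $=$, $E$, their negations, the stage-$n$ intentional relations, and universal projections of the latter --- using disjunction, conjunction, existential projection and reindexing of coordinates, all of which preserve invariance by Lemmas~\ref{lem:lem1}, \ref{lem:lem2} and \ref{lem:lem3}. Because the structures are finite, both fixed-points reach their limits at some common stage $N$; applying the hypothesis at $N$ gives $(P^{\mathbf{T}[\mathrm{\Pi}]})^{*}=(P^{*})^{S_{\mathbf{T}}[\mathrm{\Pi}^{*}]}$ with $P^{\mathbf{T}[\mathrm{\Pi}]}$ invariant, and for $0$-ary $P$ the ``moreover'' clause follows at once: $\mathbf{T}\models(\mathrm{\Pi},P)$ iff $P^{\mathbf{T}[\mathrm{\Pi}]}=\{\emptyset\}$ iff $(P^{*})^{S_{\mathbf{T}}[\mathrm{\Pi}^{*}]}=\{\emptyset\}$ iff $S_{\mathbf{T}}\models(\mathrm{\Pi}^{*},P^{*})$.

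I expect part (ii) of the firing correspondence to be the main obstacle: pinning down why a successful firing of a translated rule over $S_{\mathbf{T}}$ necessarily comes from an honest tuple of tree nodes, i.e. why the numerical assignment to the auxiliary $i$-variables is ``realizable''. This is precisely the purpose of the extra $FUL_{m}^{*}$ conjunct and of the padding $FUL_{m}v_{1}\cdots v_{n}v_{n}\cdots v_{n}$ inserted into every rule body, and the delicate points are checking that this padding constrains exactly those $i_{v_{j}}$ and $i_{v_{j}\barwedge v_{k}}$ that actually occur in the translated body, and that the repetition of $v_{n}$ forces the realizing tuple down to the right length. The universally quantified case of part (i) is the other subtle point, but it has already been packaged as Lemma~\ref{lem:lem4}, so with (ii) settled the remainder is routine verification.
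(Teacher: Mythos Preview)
Your proposal is correct and follows essentially the same approach as the paper: induction on the fixed-point stages, with the forward inclusion coming from translating a firing node-assignment $\mu$ to $\mu^{*}$ atom by atom (using Lemma~\ref{lem:lem4} for the universal atoms), and the backward inclusion coming from the $FUL_{m}^{*}$ conjunct forcing any successful assignment in $S_{\mathbf{T}}$ to be of the form $\mu^{*}$; invariance of each stage is obtained from Lemmas~\ref{lem:lem1}--\ref{lem:lem3}. The only organizational difference is that the paper first establishes invariance of all stages separately and then runs the induction for the equality $(P^{i}_{(n)})^{*}=(P^{i})^{*}_{(n)}$, whereas you fold both into a single induction hypothesis; this is cosmetic.
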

\begin{proof}
		We first show that if $P$ is an intensional relation symbol in $\mathrm{\Pi}$ and
	$\mathbf{T}$ is a structure in $\mathcal{T}$, then $P^{\mathbf{T}[\mathrm{\Pi}]}$
	is an invariant relation on $\mathbf{T}$. Let
	$P^{1},\dots,P^{m'}$ be all intensional relation symbols in $\mathrm{\Pi}$.
	Consider the following formula constructed for each $P^{i}$
	\[
	\begin{array}{r}
		\phi_{P^{i}}(\bar{x}_{P^{i}})=\bigvee\{\exists\bar{v}(\alpha_{1}\wedge\cdots\wedge\alpha_{l})\mid P^{i}\bar{x}_{P^{i}}\leftarrow\alpha_{1},\dots,\alpha_{l}\in\mathrm{\Pi} \text{ and }\bar{v}\text{ are the}\\
		\text{free variables in }\alpha_{1}\wedge\cdots\wedge\alpha_{l} \text{ that are different from }\bar{x}_{P^{i}}\}.
	\end{array}
	\]
	If the relation defined by each $\alpha_{s}$ is an invariant relation,
	then by Lemmas~\ref{lem:lem1}, \ref{lem:lem2} and \ref{lem:lem3},
	we know that the relation defined by $\phi_{P^{i}}$
	is also an invariant relation. Each $\alpha_{s}$ is either an atomic (or negated atomic) formula with relation symbol from $\{=,E,R_{1},\dots,R_{k}\}$, where the relations defined by them are all invariant relations, or an atomic formula $P^{j}\bar{x}$ or a formula $\forall\bar{y}P^{j}\bar{y}\bar{z}\, (1\leq j\leq m')$.
	
	When computing  the fixpoint of $P^{1},\dots,P^{m'}$, we set
	$P_{(0)}^{i}=\emptyset\,(1\leq i\leq m')$, where $\emptyset$ is
	an invariant relation. By Lemma~\ref{lem:lem3} we know that if $P^{j}$
	is an invariant relation then the relation defined by $\forall\bar{y}P^{j}\bar{y}\bar{z}$
	is also an invariant relation. We proceed by induction on $n$. Suppose that $P_{(n)}^{1},\dots,P_{(n)}^{m'}$ are invariant relations, then each
	\begin{align*}
		P_{(n+1)}^{i} & =\{\bar{a}\mid(\mathbf{T},P_{(n)}^{1},\dots,P_{(n)}^{m'})\models\phi_{P^{i}}(\bar{x}_{P^{i}})[\bar{a}]\}, \text{or}\\
		P_{(n+1)}^{i} & =\{\emptyset\mid(\mathbf{T},P_{(n)}^{1},\dots,P_{(n)}^{m'})\models\phi_{P^{i}}\}
	\end{align*}
	is also an invariant relation. Therefore, the fixpoints $P_{(\infty)}^{1},\dots,P_{(\infty)}^{m'}$
	are invariant relations, i.e., $P^{\mathbf{T}[\mathrm{\Pi}]}$ is an invariant
	relation on $\mathbf{T}$.
	
	Next we shall show that $(P^{\mathbf{T}[\mathrm{\Pi}]})^{*}=(P^{*})^{S_{\mathbf{T}}[\mathrm{\Pi}^{*}]}$.
	It suffices to prove that $(P_{(n)}^{i})^{*}=(P^{i})_{(n)}^{*}\,(1\leq i\leq m')$
	for each $n\geq0$. The proof is by induction on $n$.
	
	\vskip 2mm \textbf{Basis:} If $n=0$, then $P_{(0)}^{i}=\emptyset$, $(P^{i})_{(0)}^{*}=\emptyset\,(1\leq i\leq m')$.
	We have $(P_{(0)}^{i})^{*}=(P^{i})_{(0)}^{*}\,(1\leq i\leq m')$.
	
	\vskip 2mm \textbf{Inductive step:} Assuming $(P_{(k)}^{i})^{*}=(P^{i})_{(k)}^{*}$ $(1\leq i\leq m')$, we show that $(P_{(k+1)}^{i})^{*}=(P^{i})_{(k+1)}^{*}$ $(1\leq i\leq m')$.
	The case where $P^i$ is 0-ary is trivial, in the following we only consider the relation $P^i$ of no 0-ary.
	
	To prove $(P_{(k+1)}^{i})^{*}\subseteq(P^{i})_{(k+1)}^{*}$, suppose
	$\bar{e}\in(P_{(k+1)}^{i})^{*}$ for some $\bar{e}\in S_{\mathbf{T}}$. There must be a tuple $\bar{a}$ of $\mathbf{T}$ such that $\bar{a}\in P_{(k+1)}^{i}$
	and $\bar{e}=(\bar{a})^{*}$. By the semantics of Datalog$^r$
	we know that
	\[
	P_{(k+1)}^{i}=\{\bar{a}\mid(\mathbf{T},P_{(k)}^{1},\dots,P_{(k)}^{m'})\models\phi_{P^{i}}(\bar{x}_{P^{i}})[\bar{a}]\}.
	\]
	By the definition of $\phi_{P^i}$, there is a rule $P^{i}\bar{x}_{P^{i}}\leftarrow\alpha_{1},\dots,\alpha_{l}$
	in $\mathrm{\Pi}$ such that
	\[
	\langle \mathbf{T},P_{(k)}^{1},\dots,P_{(k)}^{m'} \rangle \models\exists\bar{v}(\alpha_{1}\wedge\cdots\wedge\alpha_{l})[\bar{a}].
	\]
	Thus, there exists some $\bar{b}$ such that
	\[
	\langle \mathbf{T},P_{(k)}^{1},\dots,P_{(k)}^{m'} \rangle \models(\alpha_{1}\wedge\cdots\wedge\alpha_{l})[\bar{a}\bar{b}].
	\]
	Because $P^{i}\bar{x}_{P^{i}}\leftarrow\alpha_{1},\dots,\alpha_{l}$
	is a rule of $\mathrm{\Pi}$, we can infer that
	\[
	(P^{i})^{*}(\bar{x}_{P^{i}})^{*}\leftarrow\alpha_{1}',\dots,\alpha_{l}',\mathrm{FUL}_{m}^{*}(\bar{x}_{P^{i}}\bar{v}\tilde{v}')^{*}
	\]
	is a rule of $\mathrm{\Pi}^{*}$, where $\alpha_{1}',\dots,\alpha_{l}'$ and $\mathrm{FUL}_{m}^{*}(\bar{x}_{P^{i}}\bar{v}\tilde{v}')^{*}$ are
	obtained by replacing $\alpha_{1},\dots,\alpha_{l},\mathrm{FUL}_{m}$ with the corresponding
	formulas respectively in the construction of $\mathrm{\Pi}^*$.
	Note that we replace $\forall\bar{y}P\bar{y}\bar{z}$ by $\Psi_{P}$, and by Lemma~\ref{lem:lem4} the relation defined by $\Psi_{P}$ is the characteristic relation of that defined by $\forall\bar{y}P\bar{y}\bar{z}$. By the definition of $S_{\mathbf{T}}$ and the induction hypothesis $(P_{(k)}^{i})^{*}=(P^{i})_{(k)}^{*}\,(1\leq i\leq m')$ we deduce that
	\begin{align*}
		\langle S_{\mathbf{T}},(P^{1})_{(k)}^{*},\dots,(P^{m'})_{(k)}^{*}\rangle & \models(\alpha_{1}'\wedge\cdots\wedge\alpha_{l}'\wedge \mathrm{FUL}_{m}^{*}(\bar{x}_{P^{i}}\bar{v}\tilde{v}')^{*})[(\bar{a}\bar{b})^{*}],\text{i.e.,}\\
		\langle S_{\mathbf{T}},(P^{1})_{(k)}^{*},\dots,(P^{m'})_{(k)}^{*}\rangle & \models\exists\bar{u}(\alpha_{1}'\wedge\cdots\wedge\alpha_{l}'\wedge \mathrm{FUL}_{m}^{*}(\bar{x}_{P^{i}}\bar{v}\tilde{v}')^{*})[(\bar{a})^{*}]
	\end{align*}
	where $\bar{u}$ are the free variables in $\alpha_{1}'\wedge\cdots\wedge\alpha_{l}'\wedge \mathrm{FUL}_{m}^{*}(\bar{x}_{P^{i}}\bar{v}\tilde{v}')^{*}$
	that are different from $(\bar{x}_{P^{i}})^{*}$. Combining $\bar{e}=(\bar{a})^{*}$
	we obtain $\bar{e}\in(P^{i})_{(k+1)}^{*}$.
	
	To prove $(P^{i})_{(k+1)}^{*}\subseteq(P_{(k+1)}^{i})^{*}$, suppose $\bar{e}\in(P^{i})_{(k+1)}^{*}$ for some $\bar{e}\in S_{\mathbf{T}}$.
	There must exist a rule
	\begin{equation}
		(P^{i})^{*}(\bar{x}_{P^{i}})^{*}\leftarrow\alpha_{1}',\dots,\alpha_{l}',\mathrm{FUL}_{m}^{*}(\bar{x}_{P^{i}}\bar{v}\tilde{v}')^{*}\label{eq:rule1}
	\end{equation}
	in $\mathrm{\Pi}^{*}$ such that
	\[
	\langle S_{\mathbf{T}},(P^{1})_{(k)}^{*},\dots,(P^{m'})_{(k)}^{*} \rangle \models\exists\bar{u}(\alpha_{1}'\wedge\cdots\wedge\alpha_{l}'\wedge \mathrm{FUL}_{m}^{*}(\bar{x}_{P^{i}}\bar{v}\tilde{v}')^{*})[\bar{e}].
	\]
	Hence there exists a tuple $\bar{f}\in S_{\mathbf{T}}$ such that
	\[
	(S_{\mathbf{T}},(P^{1})_{(k)}^{*},\dots,(P^{m'})_{(k)}^{*})\models(\alpha_{1}'\wedge\cdots\wedge\alpha_{l}'\wedge \mathrm{FUL}_{m}^{*}(\bar{x}_{P^{i}}\bar{v}\tilde{v}')^{*})[\bar{e}\bar{f}].
	\]
	The formula $\mathrm{FUL}_{m}^{*}(\bar{x}_{P^{i}}\bar{v}\tilde{v}')^{*}$ guarantees
	that $(\bar{a}\bar{b})^{*}=\bar{e}\bar{f}$ and $(\bar{a})^{*}=\bar{e}$
	for some tuple $\bar{a}\bar{b}$ of $\mathbf{T}$. Because $P_{(k)}^{1},\dots,P_{(k)}^{m'}$ are invariant relations, by the induction hypothesis $(P_{(k)}^{i})^{*}=(P^{i})_{(k)}^{*}$ $(1\leq i\leq m')$
	we know that
	\[
	\langle \mathbf{T},P_{(k)}^{1},\dots,P_{(k)}^{m'} \rangle \models(\alpha_{1}\wedge\cdots\wedge\alpha_{l})[\bar{a}\bar{b}]
	\]
	where $\alpha_{1},\dots,\alpha_{l}$ occur in the rule
	$P^{i}\bar{x}_{P^{i}}\leftarrow\alpha_{1},\dots,\alpha_{l}$ that is the original of (\ref{eq:rule1}) in $\mathrm{\Pi}$. Hence $\bar{a}\in P_{(k+1)}^{i}$, which implies $\bar{e}\in(P_{(k+1)}^{i})^{*}$. This completes the proof. \qed
\end{proof}

\subsection{Nondefinability results for Datalog$^r$}\label{sec-nondef}

The complexity class EXPTIME contains the decision problems decidable by a deterministic Turing machine in $O(2^{n^{c}})$ time. By the time hierarchy theorem, we know that PTIME is a proper subset of EXPTIME. In this section, for every class of structures $\mathcal{K}\in$ EXPTIME, we construct a class $\mathcal{K}'$ of structures that is in PTIME and closed under substructures, and show that if $\mathcal{K}'$ is definable by a Datalog$^r$ formula, then $\mathcal{K}$ is in P, which is impossible.

Let $c$ be a constant, and $\mathbf{A}=\langle \{0,\dots, h-1\},$ $ R^A_1,\dots, R^A_k\rangle$ a $\tau$-structure. The \textit{trivial extension} $\mathbf{A}^+=\langle \{0,\dots,h-1,h,\dots, h+h^c-1\},  R^A_1,\dots, R^A_k\rangle$ of $\mathbf{A}$ is a $\tau$-structure obtained by adding $h^c$ dummy elements to the domain of $\mathbf{A}$ and keeping all other relations unchanged.

For technical reasons we introduce a new unary relation symbol $U$ and let $\tau_U=\tau\cup\{U\}$,
$\tau'_U=\tau\cup\{\mathbf{root}, E, U\}$. From now on when we speak of a $\tau'_U$-structure 
\[ \mathbf{G}=\langle V, \mathbf{root}, E, U, R_1,\cdots, R_k\rangle\]
we assume that
\begin{description}
	\item[(1)] $\langle V,E,\mathbf{root} \rangle$ is a directed acyclic graph and the nodes reachable from $\mathbf{root}$ form a binary tree, and
	
	\item[(2)] all relations $U, R_1,\cdots, R_k$ are saturated relations restricted on $T(\mathbf{G})$, which is the largest perfect binary subtree of $\mathbf{G}$ with $\mathbf{root}$ as the root.
\end{description}
It is easy to check that if a $\tau'_U$-structure $\mathbf{G}$ satisfies the aforementioned two conditions, then all its substructures also satisfy the two conditions.

\begin{definition}\label{fz6} 
	Let $\mathcal{K}$ be a class of $\tau$-structures. Define a class $\mathcal{K}'$ of $\tau'_U$-structures such that, for any $\mathbf{G}=\langle V,\mathbf{root},E,U,R_{1},\dots,R_{k}\rangle$, let $h$ be the largest number where all nodes in the first $h$ levels of $T(\mathbf{G})$ are marked by $U$, $\mathbf{G}\in \mathcal{K}'$ iff the following Condition~(1) or Condition~(2) holds.
	
	\begin{description}
		\item[Condition (1)]\quad
		\begin{description}
			\item[(a)] The depth of $T(\mathbf{G})$ is $h+h^c$.
			\item[(b)] The relations $R_{1},\dots,R_{k}$ do not hold on any tuple that contains a node in the last $h^c$ consecutive levels of $T(\mathbf{G})$.
			\item[(c)] $C^{-1}(T(\mathbf{G}))$ is the trivial extension of $C^{-1}(T_h(\mathbf{G}))$, where $T_h(\mathbf{G})$ is the subtree of $T(\mathbf{G})$ by restricting to the first $h$ levels.
			\item[(d)] $C^{-1}(T_h(\mathbf{G}))\in \mathcal{K}$ when ignoring the relation $U$.
		\end{description}
		
		\item[Condition (2)]\quad
		\begin{description}
			\item[(a)] The depth of $T(\mathbf{G})$ is strictly less than $h+h^c$.
		\end{description}
	\end{description}
\end{definition}

\begin{proposition}\label{pfz1}
	Let $\mathcal{K}$ be an arbitrary class of $\tau$-structures decidable in $2^{n^c}$ time, where $n$ is the cardinality of the structure's domain, and $\mathcal{K}'$ defined as above. Then
	\begin{description}
		\item[(i)] $\mathcal{K}'$ is closed under substructures;
		\item[(ii)] $\mathcal{K}'$ is decidable in $\mathrm{PTIME}$.
	\end{description}
\end{proposition}
\begin{proof}
		To prove (i), suppose that $\mathbf{G}$ is
	a $\tau'_U$-structure in $\mathcal{K}'$, then it satisfies either Condition~(1) or Condition~(2) in Definition~\ref{fz6}. Let $\mathbf{H}$ be an arbitrary substructure of
	$\mathbf{G}$. If $\mathbf{G}$ satisfies Condition~(2), then $\mathbf{H}$ also satisfies Condition~(2), and is in $\mathcal{K}'$. Suppose that $\mathbf{G}$ satisfies Condition~(1), then the perfect binary tree $T(\mathbf{H})$ either equals $T(\mathbf{G})$, which implies $\mathbf{H}$ satisfies Condition~(1), or the depth of $T(\mathbf{H})$ is less than that of $T(\mathbf{G})$, which implies $\mathbf{H}$ satisfies Condition~(2). Altogether, $\mathbf{H}\in \mathcal{K}'$.
	
	To prove (ii), let $\mathbf{G}$ be an arbitrary $\tau'_U$-structure, we just need to do the following steps to check whether $\mathbf{G}\in \mathcal{K}'$:
	\begin{description}
		\item[(1)] Check that $\langle V, E\rangle$ is a directed acyclic graph.
		
		\item[(2)] Check that all nodes reachable from $\mathbf{root}$ form a binary tree.
		
		\item[(3)] Compute $T(\mathbf{G})$, the largest perfect binary subtree with $\mathbf{root}$ as root.
		
		\item[(4)] Check that $U,R_1,\dots,R_k$ are saturated relations on $T(\mathbf{G})$.
		
		\item[(5)] Compute the largest number $h$ such that all nodes in the first $h$ levels of $T(\mathbf{G})$ have property $U$.
		
		\item[(6)] Check whether the depth of $T(\mathbf{G})$ is less than $h+h^c$.
		
		\item[(7)] If the depth of $T(\mathbf{G})$ is $h+h^c$, then check whether (b), (c) and (d) in Condition~(1) of Definition~\ref{fz6} hold.
	\end{description}
	Note that $C^{-1}(T_h(\mathbf{G}))$ has $h$ elements and $\mathcal{K}$ is decidable in $2^{n^c}$ time, the statement (d) in Condition~(1) of Definition~\ref{fz6} can be verified in polynomial time since if the depth of $T(\mathbf{G})$ is $h+h^c$ then the input size is at least $2^{h+h^c}$. \qed
\end{proof}

Let $\mathcal{K}$ and $\mathcal{K}'$ be defined as in Proposition~\ref{pfz1}.
For an arbitrary $\tau$-structure $\mathbf{A}$, let $\mathbf{A}_U$ be the $\tau_U$-structure obtained by marking every element in $\mathbf{A}$ by $U$, $\mathbf{A}^+_U$ the trivial extension of $\mathbf{A}_U$ by adding $|A|^c$ elements, and $\mathbf{T}$ the $\tau'_U$-structure such that $C^{-1}(\mathbf{T})=\mathbf{A}^+_U$. If $\mathcal{K}'$ is axiomatizable by a Datalog$^r$ formula $(\mathrm{\Pi},Q)$, then
\begin{equation} \label{eqfz3} 
	\mathbf{A}\in \mathcal{K}\mbox{ \ \ iff \ \ }\mathbf{T}\in \mathcal{K}'\mbox{\ \  iff\ \  }\mathbf{T}\models (\mathrm{\Pi},Q).
\end{equation}
Define the vocabulary
\[
\sigma_U=\{\mathbf{0},\mathrm{SUCC},R_{\neq},R_{\neg e},\mathrm{FUL}_{m}^{*},U^{*},R_{1}^{*},\dots,R_{k}^{*},(\neg U)^{*},(\neg R_{1})^{*},\dots,(\neg R_{k})^{*}\}.
\]
By Definition~\ref{fz4}, we can compute $\mathbf{T}$'s characteristic structure that is
a $\sigma_U$-structure
\[
\begin{array}{lr}
	S_{\mathbf{T}}= & \bigl\langle\{0,1,\dots,|A|+|A|^c-1\},\mathbf{0},\mathrm{SUCC},R_{\neq},R_{\neg e},\mathrm{FUL}_{m}^{*},U^{*},\\
	& R_{1}^{*},\dots,R_{k}^{*},(\neg U)^{*},(\neg R_{1})^{*},\dots,(\neg R_{k})^{*}\bigr\rangle
\end{array}
\]
where $\mathbf{0}$ is interpreted by 0, $\mathrm{SUCC}$ is the successor relation on the domain
and $R_{\neq},R_{\neg e}$, $\mathrm{FUL}_{m}^{*},U^{*}$, $R_{1}^{*},\dots,R_{k}^{*},(\neg U)^{*},(\neg R_{1})^{*}, \dots, (\neg R_{k})^{*}$ are the characteristic relations of $\neq,(\neg E)$, $\mathrm{FUL}_{m},U$, $R_{1}, \dots$, $R_{k}$, $\neg U,\neg R_{1},\dots,\neg R_{k}$, respectively.
By Proposition~\ref{prop:treeEqualstring}, we know there is a Datalog$^r$ formula $(\mathrm{\Pi}^*, Q^*)$ such that $\mathbf{T}\models (\mathrm{\Pi},Q)$ iff $S_{\mathbf{T}}\models (\mathrm{\Pi}^*, Q^*)$. Combining (\ref{eqfz3}), we have $\mathbf{A}\in\mathcal{K}$ iff $S_{\mathbf{T}}\models (\mathrm{\Pi}^*,Q^*)$. 

\begin{lemma}\label{lem-logspace-comp}
	$S_{\mathbf{T}}$ is logspace computable from $\mathbf{A}$.
\end{lemma}
\begin{proof}
Given a $\tau$-structure $\mathbf{A}=\langle \{0,1,2,\dots,h-1\},R_{1},\dots,R_{k}\rangle$ as the input, we shall compute the corresponding $\sigma_U$-structure
	\[
	\begin{array}{lr}
		S_{\mathbf{T}}= & \bigl\langle\{0,1,\dots,h+h^{c}-1\},\mathbf{0},\mathrm{SUCC},R_{\neq},R_{\neg e},\mathrm{FUL}_{m}^{*},U^{*},\,\\
		& R_{1}^{*},\dots,R_{k}^{*},(\neg U)^{*},(\neg R_{1})^{*},\dots,(\neg R_{k})^{*}\bigr\rangle.
	\end{array}
	\]
	Obviously, $\{0,1,\dots,h+h^{c}-1\},\mathbf{0},\mathrm{SUCC},U^{*},(\neg U)^{*}$ are computable in logspace, where $U^{*}= \{0,1,\dots,h-1\}$ and $(\neg U)^{*}=\{h,h+1,\dots,h+h^{c}-1\}$.
	
	Consider the relation $R_{\neq}$, a tuple $(e_{1},e_{2},e_{3})\in R_{\neq}$ iff $(e_{1},e_{2},e_{3})$ is the characteristic tuple $(d(a),d(a\barwedge b),d(b))$ of a tuple
	$(a,b)$ of nodes of $\mathbf{T}$ such that $C^{-1}(\mathbf{T})=\mathbf{A}^+$ and $a\neq b$.  If $d(a)=d(b)$, then $a,b$ are in the same level, hence $d(a\barwedge b)<d(a)$.
	Otherwise, $d(a\barwedge b)\leq\min\{d(a),d(b)\}$. 
	
	\begin{algorithm}[H]
		\SetKwInOut{Input}{Input}
		\SetKwInOut{Output}{Output}
		\Input{A tuple $(e_{1},e_{2},e_{3})$.}
		\Output{Accept if $(e_{1},e_{2},e_{3})\in R_{\neq}$, and reject otherwise.}
		\eIf{$e_{1}=e_{3}$}{
			\leIf{$e_{2}<e_{1}$}{accept} {reject}
		}{
			\leIf{$e_{2}\leq\min\{e_{1},e_{3}\}$}{accept} {reject}
		}
		\caption{\label{algo-relation-euqa}The procedure to decide $R_{\neq}$.}
	\end{algorithm}
	
	For the relation $R_{\neg e}$, a tuple $(e_{1},e_{2},e_{3})\in R_{\neg e}$
	iff $(e_{1},e_{2},e_{3})$ is the characteristic tuple $(d(a),d(a\barwedge b),d(b))$
	of a tuple $(a,b)$ of nodes of $\mathbf{T}$ such that $C^{-1}(\mathbf{T})=\mathbf{A}^+$ and $\neg Eab$.
	Since $Eab$ iff $d(a)+1=d(b)$ and $d(a\barwedge b)=d(a)$, thus,
	if $d(a)+1=d(b)$ then $\neg Eab$ iff $d(a\barwedge b)<d(a)$, and
	if $d(b)\leq d(a)$ or $d(a)<d(b)-1$ then $\neg Eab$.
	
	\begin{algorithm}[H]
		\SetKwInOut{Input}{Input}
		\SetKwInOut{Output}{Output}
		\Input{A tuple $(e_{1},e_{2},e_{3})$.}
		\Output{Accept if $(e_{1},e_{2},e_{3})\in R_{\neg e}$, and reject otherwise.}
		\eIf{$e_{1}\geq e_{3}$}{
			\leIf{$e_{2} \leq e_{3}$}{accept} {reject}
		}{
			\eIf{$e_{1}+1=e_{3}$}{
				\leIf{$e_{2}< e_1$}{accept} {reject}
			}{
				\leIf{$e_{2}\leq e_{1}$}{accept} {reject}
			}
		}
		\caption{\label{algo-relation-edge}The procedure to decide $R_{\neg e}$.}
	\end{algorithm}
	
	For the other relations, suppose that $Q$ is a relation on $\mathbf{T}$ which is from the set
	\[\{\mathrm{FUL}_{m},R_{1},\dots,R_{k},(\neg R_{1}),\dots,(\neg R_{k})\},\]
	then $Q^*$ is a relation in the set 
	\[
	\{\mathrm{FUL}_{m}^{*},R_{1}^{*},\dots,R_{k}^{*},(\neg R_{1})^{*},\dots,(\neg R_{k})^{*}\}.
	\]
	If there exists one tuple $(e_{1},e_{1\barwedge2},\dots,e_{n-1\barwedge n},e_{n})\in Q^*$,
	it must be the characteristic tuple 
	\[(d(a_{1}),d(a_{1}\barwedge a_{2}),\dots,d(a_{n-1}\barwedge a_{n}),d(a_{n}))\]
	of some tuple $(a_{1},a_{2}, \dots,a_{n})$ of nodes of $\mathbf{T}$ such that $C^{-1}(\mathbf{T})=\mathbf{A}^+$ and $Qa_{1}a_{2} \dots a_{n}$ holds on $\mathbf{T}$. 
	We divide the process to decide $Q^*$ into several sub-procedures.
	
	Let $D=\{a_{1},\dots,a_{n}\}$ be a set of nodes of $\mathbf{T}$ and $b$ the least common ancestor of the nodes in $D$. Define
	\begin{align*}
		\kappa_{1}= & \min\{d(a_{1}),\dots,d(a_{n})\}
		\\
		\kappa_{2}= & \min\{d(a_{i}\barwedge a_{j})\mid1\leq i<j\leq n\}
		\\
		D_{l}= & \{a\in D\mid a\text{ is in the left  subtree of }b\}
		\\
		D_{r}= & \{a\in D\mid a\text{ is in the right subtree of }b\}
	\end{align*}
	It is easily seen that the following conditions must be satisfied.
	\begin{description}
		\item[(1)] $\kappa_{1}\geq\kappa_{2}$ and $\kappa_{2}=d(b)$.
		\item[(2)] If $\kappa_{1}=\kappa_{2}$, then $b\in D$, $\{b\}\cup D_{l}\cup D_{r}=D$ and
		\begin{itemize}
			\item for any $a_{i}\in D$, $d(b)=d(a_{i})$ implies $b=a_{i}$;
			\item for any $a_{i}\in D_{l}\cup D_{r}$, $d(b\barwedge a_{i})=\kappa_{1}$;
			\item for any $a_{i},a_{j}\in D_{l}$, $d(a_{i}\barwedge a_{j})>\kappa_{1}$;
			\item for any $a_{i},a_{j}\in D_{r}$, $d(a_{i}\barwedge a_{j})>\kappa_{1}$;
			\item for any $a_{i}\in D_{l}$ and $a_{j}\in D_{r}$, $d(a_{i}\barwedge a_{j})=\kappa_{1}$.
		\end{itemize}
		
		\item[(3)] If $\kappa_{1}>\kappa_{2}$, then $b\not\in D$,  $D_{l}\cup D_{r}=D$ and
		
		\begin{itemize}
			\item for any $a_{i},a_{j}\in D_{l}$, $d(a_{i}\barwedge a_{j})>\kappa_{2}$;
			\item for any $a_{i},a_{j}\in D_{r}$, $d(a_{i}\barwedge a_{j})>\kappa_{2}$;
			\item for any $a_{i}\in D_{l}$ and $a_{j}\in D_{r}$, $d(a_{i}\barwedge a_{j})=\kappa_{2}$.
		\end{itemize}
	\end{description}
	
	If $(e_{1},e_{1\barwedge2},\dots,e_{n-1\barwedge n},e_{n})$ is the characteristic tuple of $(a_1,\dots,a_n)$, it must satisfy (1), (2) and (3).
	We present a procedure PreCHECK in Algorithm~\ref{algo-relation-precheck} to check these conditions by dividing a tuple into two parts: the left subtree part $S_{l}$ and right subtree part $S_{r}$. 
	Any legal characteristic tuple can pass the check, but not all tuples accepted by PreCHECK are necessarily characteristic tuples. Note that for the tuple of length 1 or length 3 (e.g., $(e_1)$ or $(e_{1},e_{1\barwedge2},e_{2})$), PreCHCEK can correctly decide whether it is characteristic tuple or not.
	
	\begin{algorithm}[H]
		\SetKwInOut{Input}{Input}
		\SetKwInOut{Output}{Output}
		\Input{A tuple $(e_{1},e_{1\barwedge2},\dots,e_{n-1\barwedge n},e_{n})$.}
		\Output{Accept if the tuple satisfies (1),(2),(3), and reject otherwise.}
		\Switch{$\kappa_{1}\longleftarrow \min\{e_{1},e_{2}\dots,e_{n}\}$, $\kappa_{2} \longleftarrow \min \{e_{i\barwedge j}\mid1\leq i<j\leq n\}$}{
			\lCase{$\kappa_{1}<\kappa_{2}$}{reject}
			\Case{$\kappa_{1}=\kappa_{2}$}{
				$e_{s}\longleftarrow$ the first element in $(e_{1},e_{2}\dots,e_{n})$ that is not equal to $\kappa_{1}$\;
				$S\longleftarrow \{e_{i}\mid1\leq i\leq n\text{ and }e_{i}=\kappa_{1}\}$\;
				$S_{l}\longleftarrow \{e_{i}\mid e_{i}\in\{e_{1},e_{2}\dots,e_{n}\}/S \text{ and }e_{s\barwedge i}>\kappa_{1}\}\cup\{e_{s}\}$\;
				$S_{r}\longleftarrow \{e_{i}\mid e_{i}\in\{e_{1},e_{2}\dots,e_{n}\}/S \text{ and }e_{s\barwedge i}=\kappa_{1}\}$\;
				\leIf{one of the following conditions is satisfied
					\begin{itemize}
						\item $\exists\, e_{i},e_{j}\in S$ such that $e_{i\barwedge j}\neq\kappa_{1}$
						(or $e_{j\barwedge i}\neq\kappa_{1}$);
						\item $\exists\, e_{i}\in S$ and $e_{j}\in S_{l}\cup S_{r}$ such that
						$e_{i\barwedge j}\neq\kappa_{1}$ (or $e_{j\barwedge i}\neq\kappa_{1}$);
						\item $\exists\, e_{i},e_{j}\in S_{l}$ such that $e_{i\barwedge j}=\kappa_{1}$
						(or $e_{j\barwedge i}=\kappa_{1}$);
						\item $\exists\, e_{i},e_{j}\in S_{r}$ such that $e_{i\barwedge j}=\kappa_{1}$
						(or $e_{j\barwedge i}=\kappa_{1}$);
						\item $\exists\, e_{i}\in S_{l}$ and $e_{j}\in S_{r}$ such that $e_{i\barwedge j}\neq\kappa_{1}$(or $e_{j\barwedge i}\neq\kappa_{1}$).
					\end{itemize}
				}{reject}{accept}
			}
			\Case{$\kappa_{1}>\kappa_{2}$}{
				$S_{l}\longleftarrow \{e_{i}\mid1< i\leq n\text{ and }e_{1\barwedge i}>\kappa_{2}\}\cup\{e_{1}\}$\;
				$S_{r}\longleftarrow \{e_{i}\mid1< i\leq n\text{ and }e_{1\barwedge i}=\kappa_{2}\}$\;
				\leIf{one of the following conditions is satisfied
					\begin{itemize}
						\item $\exists\, e_{i},e_{j}\in S_{l}$ such that $e_{i\barwedge j}=\kappa_{2}$
						(or $e_{j\barwedge i}=\kappa_{2}$);
						\item $\exists\, e_{i},e_{j}\in S_{r}$ such that $e_{i\barwedge j}=\kappa_{2}$
						(or $e_{j\barwedge i}=\kappa_{2}$);
						\item $\exists\, e_{i}\in S_{l}$ and $e_{j}\in S_{r}$ such that $e_{i\barwedge j}\neq\kappa_{2}$ (or $e_{j\barwedge i}\neq\kappa_{2}$).
					\end{itemize}
				}{reject}{accept}
			}
		}
		\caption{\label{algo-relation-precheck}The procedure of PreCHECK.}
	\end{algorithm}
	
	The procedure CHECK in Algorithm~\ref{algo-relation-check} decides the tuple whose length is more than 3. CHECK executes by recursively dividing the tuple into the left and right parts, and then invoking itself on these new tuples until the result is obtained.
	
	\begin{algorithm}[H]
		\SetKwInOut{Input}{Input}
		\SetKwInOut{Output}{Output}
		\Input{A tuple $\bar{e}=(e_{1},e_{1\barwedge2},\dots,e_{n-1\barwedge n},e_{n})$.}
		\Output{Accept if $\bar{e}$ is a characteristic tuple, and reject otherwise.}
		\eIf{ $|\bar{e}|=0$ or $|\bar{e}|=1$}{accept\;}{
			\lIf{PreCHECK($\bar{e}$) rejects}{reject}
			\Switch{$\kappa_{1}\longleftarrow \min\{e_{1},\dots,e_{n}\}$,$\kappa_{2}\longleftarrow \min\{e_{i\barwedge j}\mid1\leq i<j\leq n\}$}{
				\Case{$\kappa_{1}=\kappa_{2}$}{
					$e_{s}\longleftarrow$ the first element in $(e_{1},e_{2}\dots,e_{n})$ that is not equal to $\kappa_{1}$\;
					$S\longleftarrow \{e_{i}\mid1\leq i\leq n\text{ and }e_{i}=\kappa_{1}\}$\;
					$S_{l}\longleftarrow \{e_{i}\mid e_{i}\in\{e_{1},\dots,e_{n}\}/S\text{ and }e_{s\barwedge i}>\kappa_{1}\}\cup\{e_{s}\}$\;
					$S_{r}\longleftarrow \{e_{i}\mid e_{i}\in\{e_{1},\dots,e_{n}\}/S\text{ and }e_{s\barwedge i}=\kappa_{1}\}$\;
					$\bar{e}_{l}\longleftarrow (e_{l_{1}},e_{l_{1}\barwedge l_{2}},e_{l_{2}},\dots,e_{l_{h-1}\barwedge l_{h}},e_{l_{h}})$, where $\{e_{l_{1}},\dots,e_{l_{h}}\}=S_{l}$\;
					$\bar{e}_{r}\longleftarrow (e_{r_{1}},e_{r_{1}\barwedge r_{2}},e_{r_{2}},\dots,e_{r_{g-1}\barwedge r_{g}},e_{r_{g}})$, where $\{e_{r_{1}},\dots,e_{r_{g}}\}=S_{r}$\;
					\leIf{CHECK($\bar{e}_{l}$) or CHECK($\bar{e}_{r}$) rejects}{reject}{accept}
				}
				\Case{$\kappa_{1}>\kappa_{2}$}{
					$S_{l}\longleftarrow \{e_{i}\mid1<i\leq n\text{ and }e_{1\barwedge i}>\kappa_{2}\}\cup\{e_{1}\}$\;
					$S_{r}\longleftarrow \{e_{i}\mid1<i\leq n\text{ and }e_{1\barwedge i}=\kappa_{2}\}$\;
					$\bar{e}_{l}\longleftarrow (e_{l_{1}},e_{l_{1}\barwedge l_{2}},e_{l_{2}},\dots,
					e_{l_{h-1}\barwedge l_{h}},e_{l_{h}})$, where $\{e_{l_{1}},\dots,e_{l_{h}}\}=S_{l}$\;
					$\bar{e}_{r}\longleftarrow (e_{r_{1}},e_{r_{1}\barwedge r_{2}},e_{r_{2}},\dots,e_{r_{g-1}\barwedge r_{g}},e_{r_{g}})$, where $\{e_{r_{1}},\dots,e_{r_{g}}\}=S_{r}$\;
					\leIf{CHECK($\bar{e}_{l}$) or CHECK($\bar{e}_{r}$) rejects}{reject}{accept}
				}
			}
		}
		\caption{\label{algo-relation-check}The procedure of CHECK.}
	\end{algorithm}
	
	If $Q^*$ is the relation $\mathrm{FUL}^*$, then for any tuple $\bar{e}$, $\mathrm{FUL}^*\bar{e}$ holds iff $\bar{e}$ passes the procedure CHECK. For the other relations $Q^*\in \{R_{1}^{*},\dots,R_{k}^{*},(\neg R_{1})^{*},\dots,(\neg R_{k})^{*}\}$, we can decide $Q^*$ using Algorithm~\ref{algo-relation-Q}, in which $Q^{-1}$ is the corresponding relation in $\mathbf{A}^+$ that is encoded by $Q$ in $\mathbf{T}$.
	
	\begin{algorithm}[H]
		\SetKwInOut{Input}{Input}
		\SetKwInOut{Output}{Output}
		\Input{A tuple $\bar{e}=(e_{1},e_{1\barwedge2},\dots,e_{r-1\barwedge r},e_{r})$.}
		\Output{Accept if $\bar{e}\in Q^{*}$, and reject otherwise.}
		\eIf{CHECK($\bar{e}$) rejects}{
			reject\;
		}{
			\leIf{$Q^{-1}e_{1} e_{2}\dots e_{r}$ holds in $\mathbf{A}^+$}{accept}{reject}
		}
		\caption{\label{algo-relation-Q}The procedure to decide $Q^*$.}
	\end{algorithm}

	Since the maximal arity of all relations is fixed, both the length of the characteristic tuple and the recursion depth of CHECK are bounded by a constant. 
	It is easy to check that each procedure runs in $O(\log |A|)$ space, where $A$ is the domain of the input structure $\mathbf{A}$.
	Altogether, we can use Algorithm~\ref{algo-relation-euqa}, \ref{algo-relation-edge} and \ref{algo-relation-Q} to enumerate the relations of $S_\mathbf{T}$ in logarithmic space. \qed
\end{proof}

By Lemma~\ref{lem-logspace-comp}, we know that $S_{\mathbf{T}}$ is computable from $\mathbf{A}$ in polynomial time. Hence, $\mathcal{K}$ is in PTIME. 
Since $\mathcal{K}$ is an arbitrary class in EXPTIME, this would imply EXPTIME=PTIME, which contradicts the time hierarchy theorem. So we must have:

\begin{proposition} \label{thm:closureNOTinLFP}
	There is a problem in $\mathrm{PTIME}$ and closed under substructures but not definable in $\mathrm{Datalog}^r$.
\end{proposition}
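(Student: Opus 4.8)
The plan is to argue by contradiction, exactly assembling the pieces that have already been set up in Sections 3 and 4. Suppose for contradiction that $\mathcal{K}'$, as constructed in Definition \ref{fz6} from an \emph{arbitrary} class $\mathcal{K}$ of $\tau$-structures decidable in time $2^{n^c}$, were definable by a $\dlog^r$ formula $(\mathrm{\Pi},Q)$ with $Q$ a $0$-ary intentional relation symbol. By Proposition \ref{pfz1}, $\mathcal{K}'$ is indeed in PTIME and closed under substructures, so it is a legitimate candidate, and the $\dlog^r$ formula would have to capture it on all finite $\tau'$-structures.

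The next step is to reduce membership in $\mathcal{K}$ to model-checking a fixed $\dlog^r$ formula on the encoded structure $S_{\mathbf{A}}$. Given any $\tau$-structure $\mathbf{A}=\langle\{0,\dots,h-1\},R_1^A,\dots,R_k^A\rangle$, mark all of its elements by $P$, form the trivial extension $\mathbf{A}^+$, and pick any $\mathbf{T}\in\mathcal{T}$ with $C^{-1}(\mathbf{T})=\mathbf{A}^+$. As observed around equivalence (\ref{eqfz3}), $\mathbf{T}$ satisfies clauses (1)--(3) of Condition 1 of Definition \ref{fz6}, so $\mathbf{A}\in\mathcal{K}$ iff $\mathbf{T}\in\mathcal{K}'$ iff $\mathbf{T}\models(\mathrm{\Pi},Q)$. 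Now apply the machinery of Section 3: the construction preceding Lemma \ref{lem:lem4} yields a $\dlog^r$ program $\mathrm{\Pi}^*$ (over the vocabulary $\sigma$) and a $0$-ary intentional symbol $Q^*$, and Proposition \ref{prop:treeEqualstring} gives $\mathbf{T}\models(\mathrm{\Pi},Q)$ iff $S_{\mathbf{T}}\models(\mathrm{\Pi}^*,Q^*)$ for every $\mathbf{T}\in\mathcal{T}$. Since $S_{\mathbf{T}}$ depends only on $\mathbf{A}$ (it is $S_{\mathbf{A}}$), we conclude that for every $\tau$-structure $\mathbf{A}$, $\mathbf{A}\in\mathcal{K}$ iff $S_{\mathbf{A}}\models(\mathrm{\Pi}^*,Q^*)$.

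Finally, one derives the complexity contradiction. By Theorem \ref{thm:SDATrEQUAlfO(LFP)}, $\dlog^r\equiv\mathsf{FO(LFP)}$, and FO(LFP) is decidable in polynomial time on finite structures, so deciding $S\models(\mathrm{\Pi}^*,Q^*)$ takes time polynomial in $|S_{\mathbf{A}}|$. By Proposition \ref{pfz3}, $S_{\mathbf{A}}$ is logspace (hence polynomial-time) computable from $\mathbf{A}$, and its size is polynomial in $|\mathbf{A}|$ because $S_{\mathbf{A}}$ has only $h+h^c$ elements while the arities of all its relations are bounded by the fixed constant $m$. Composing these, membership in $\mathcal{K}$ is decidable in time polynomial in $|\mathbf{A}|$. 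But $\mathcal{K}$ was an arbitrary class in EXPTIME, so this would force $\mathsf{EXPTIME}=\mathsf{PTIME}$, contradicting the Time Hierarchy Theorem. Hence no such $(\mathrm{\Pi},Q)$ exists, i.e.\ $\mathcal{K}'$ is not definable in $\dlog^r$; since every $\shorn^r$ sentence is equivalent to the negation of a $\dlog^r$ formula by Proposition \ref{pro:HORNequalnegDATALOG} and $\dlog^r$ is closed under the relevant Boolean operations, $\mathcal{K}'$ is not definable in $\shorn^r$ either.

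The only genuine subtlety — which is really discharged by the earlier results rather than by this proof — is making sure the chain ``$\mathbf{A}\in\mathcal{K}$ iff $S_{\mathbf{A}}\models(\mathrm{\Pi}^*,Q^*)$'' is valid for \emph{every} $\mathbf{A}$, including the bookkeeping that marking $\mathbf{A}$ by $P$ and passing to the trivial extension $\mathbf{A}^+$ really does land us in Condition 1 of Definition \ref{fz6}; this is exactly what equivalence (\ref{eqfz3}) records. Everything else is routine composition of polynomial-time bounds, so the main obstacle is not in this final argument but in Proposition \ref{prop:treeEqualstring} and Proposition \ref{pfz3}, which are assumed here.
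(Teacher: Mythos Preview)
Your proposal is correct and follows essentially the same route as the paper: assume $\mathcal{K}'$ is definable by some $(\mathrm{\Pi},Q)$, invoke equivalence (\ref{eqfz3}) and Proposition \ref{prop:treeEqualstring} to transfer to $S_{\mathbf{A}}\models(\mathrm{\Pi}^*,Q^*)$, then use Theorem \ref{thm:SDATrEQUAlfO(LFP)} and Proposition \ref{pfz3} to conclude $\mathcal{K}\in\mathsf{PTIME}$, contradicting the Time Hierarchy Theorem. For the $\shorn^r$ clause you could more directly cite the equivalence $\shorn^r\equiv\mathsf{FO(LFP)}\equiv\dlog^r$ rather than going through Proposition \ref{pro:HORNequalnegDATALOG}, but your argument is fine.
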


If a problem is closed under substructures, then its complement is closed under extensions. By Proposition~\ref{prop-datalogrequivlfp}, we can obtain the following corollary.

\begin{corollary}
	$\mathrm{DATALOG}^r\mathrm{[E]} = \mathrm{LFP[E]} \subsetneq \mathrm{PTIME[E]}$.
\end{corollary}

\section{Conclusion}\label{sec-concl}
Revised Datalog is an extension of Datalog by allowing universal quantification over intensional relations in the body of rules. On all finite structures, Datalog$^r$ is strictly more expressive than Datalog, and has the same expressive power as that of LFP. In classical model theory, the closure properties of a formula are usually related to some syntactic properties. Many preservation theorems are proved to reflect this relationship. When restricted to finite structures, most of these preservation theorems fail. From the syntax and semantics of Datalog, we can treat it as the dual of SO-HORN logic, which is closed under substructures~\cite{gradel1991expressive}. It follows that Datalog is closed under extensions. A lot of work has been conducted between Datalog and FO (or LFP) to study the closure property. In this paper, we study the expressive power of revised Datalog on the problems that are closed under substructures.
We show that Datalog$^r$ cannot define all the problems that are in PTIME and closed under substructures. As a corollary, LFP cannot define all the extension-closed problems that are in PTIME. A method of tree encodings for arbitrary structures is used in the proof. If we replace the extension closure property by the homomorphism closure property, it is still open whether the statement also holds. This desirable for future work.

\bibliographystyle{splncs04}
\bibliography{./reference}

\end{document}